\newtheorem{theorem}{Theorem}[section]
\newtheorem{definition}[theorem]{Definition}
\newtheorem{lemma}[theorem]{Lemma}
\newtheorem{proposition}[theorem]{Proposition}
\newtheorem{corollary}[theorem]{Corollary}
\crefname{table}{Table}{Tables}
\crefname{figure}{Figure}{Figures}
\crefname{algorithm}{Algorithm}{Algorithms}
\crefname{theorem}{Theorem}{Theorems}
\crefname{lemma}{Lemma}{Lemmas}
\crefname{definition}{Definition}{Definitions}
\crefname{remark}{Remark}{Remarks}
\crefname{proposition}{Proposition}{Propositions}
\newcommand{\A}{\ensuremath{\mathcal{A}}}
\newcommand{\E}{\ensuremath{\mathbb{E}}}
\newcommand{\N}{\ensuremath{\mathbb{N}}}
\renewcommand{\P}{\ensuremath{\mathbb{P}}}
\newcommand{\R}{\ensuremath{\mathbb{R}}}
\newcommand{\norm}[1]{\ensuremath{\left\|#1\right\|}}
\DeclareMathOperator{\Var}{Var}
\DeclareMathOperator{\Vol}{Vol}
\DeclareMathOperator{\poly}{poly}
\DeclareMathOperator{\conv}{conv}
\DeclareMathOperator{\diam}{diam}
\newcommand{\eps}{\varepsilon}
\title{How to compute the volume in low dimension?}
\author[1,2]{Arjan Cornelissen}
\author[1]{Simon Apers}
\author[3]{Sander Gribling}
\affil[1]{Universit\'e Paris Cit\'e, CNRS, IRIF, Paris, France}
\affil[2]{Simons Institute, UC Berkeley, California, USA}
\affil[3]{Tilburg University, Tilburg, the Netherlands}
\begin{document}
    \maketitle

    \begin{abstract}
        Estimating the volume of a convex body is a canonical problem in theoretical computer science. Its study has led to major advances in randomized algorithms, Markov chain theory, and computational geometry. In particular, determining the query complexity of volume estimation to a membership oracle has been a longstanding open question. Most of the previous work focuses on the high-dimensional limit. In this work, we tightly characterize the deterministic, randomized and quantum query complexity of this problem in the high-precision limit, i.e., when the dimension is constant.
    \end{abstract}

    \section{Introduction}

    The volume estimation problem is a widely-studied problem, that lies on the interface between convex geometry and computational complexity. The typical setting is the following, see for instance~\cite{simonovits2003compute}. We consider a convex body $K \subseteq \R^d$ that satisfies $B_d \subseteq K \subseteq RB_d$, where $B_d = \{x \in \R^d : \norm{x} \leq 1\}$ denotes the unit ball in $d$ dimensions, and $R B_d$ the radius-$R$ ball in $d$ dimensions. Denoting the volume of $K$ by $\Vol(K)$, the aim of volume estimation is to output $\widetilde{V} \geq 0$ such that $|\widetilde{V} - \Vol(K)| < \varepsilon\Vol(K)$, for a given relative precision $\varepsilon > 0$.

    In this work, we are interested in the \emph{query complexity} of volume estimation. Specifically, we will assume access to $K$ through a \textit{membership oracle} which, after querying a point $x \in \R^d$, returns whether $x \in K$. The key question is then how many membership queries (as a function of $d$, $R$ and $\varepsilon$) are required to solve the volume estimation problem. We analyze this question in three \textit{computational models}, namely the deterministic, randomized and quantum models, each of which is more powerful than the last.

    Previous work mostly studied query complexity in the high-dimensional setting, which tracks the dependency of the query complexity on the dimension $d$. The title of this work is a reference to the early survey by Simonovits called ``How to compute the volume in high dimension?''~\cite{simonovits2003compute}, and we refer the interested reader to this survey for a more in-depth discussion of this setting.

    In this work, however, we focus on the low-dimensional limit, or high-precision limit, i.e., the limit where the dimension $d$ is fixed and the precision $\varepsilon$ goes to zero.
    While we are not aware of any literature targeting this regime in the membership oracle setting, there are numerous algorithms in different access models targeting this regime.
    We summarize the most relevant of these below, and refer the interested reader to the excellent survey by Brunel~\cite{brunel2018methods}.

    \paragraph{Low-dimensional computational geometry.}
    Before turning to volume estimation, we discuss a related problem that has been well studied in the low-dimensional limit. In the \textit{convex set estimation} problem, one tries to output a convex set $\widetilde{K} \subseteq \R^d$ that estimates the original convex set $K$.

    One way to measure the error in this problem is through the \textit{relative Nikodym metric}, defined as the volume of the symmetric difference $\widetilde{K} \Delta K$ divided by the volume of $K$. An interesting sequence of results in this setting starts with a paper by Sch\"utt~\cite{schutt1994random}. They considered a bounded convex set $K \subseteq RB_d \subseteq \R^d$ with $R \in O(1)$, and showed that the convex hull of $n$ points sampled uniformly at random from $K$, approximates it up to relative Nikodym distance $O(n^{-2/(d+1)})$. Later, Brunel proved a matching lower bound~\cite{brunel2016adaptive}, implying that in order to achieve relative error~$\varepsilon$, one requires $\Theta(\varepsilon^{-(d+1)/2})$ samples from the uniform distribution over~$K$ to solve the convex set estimation problem. Surprisingly, Baldin and Rei\ss~\cite{baldin2016unbiased} showed that the volume estimation problem is significantly easier in this setting, i.e., they show that $\Theta(\varepsilon^{-2(d+1)/(d+3)})$ uniform samples from $K$ are necessary and sufficient to estimate its volume up to relative error $\varepsilon$.

    Another line of research considers a qualitatively stronger error metric for the convex set estimation problem. To that end, suppose that a convex body $\widetilde{K} \subseteq \R^d$ is an approximation of a convex body $K \subseteq \R^d$. For any $\varepsilon \in (0,1)$, Agarwal, Har-Peled and Varadarajan~\cite{agarwal2004approximating} define $\widetilde{K}$ to be an $\varepsilon$-kernel of $K$ if $\widetilde{K} \subseteq K$, and
    \[\forall u \in S_{d-1}, \qquad \max_{x \in \widetilde{K}} u^Tx - \min_{x \in \widetilde{K}} u^Tx \geq (1-\varepsilon)\left[\max_{x \in K} u^Tx - \min_{x \in K} u^Tx\right],\]
    where $S_{d-1} = \partial B_d$ is the unit sphere in $d$ dimensions. If $\widetilde{K}$ is an $\varepsilon$-kernel of $K$, then the Hausdorff distance between $\widetilde{K}$ and $K$, i.e., the maximum distance between a point in $K$ from $\widetilde{K}$ and vice versa, is at most $\varepsilon\diam(K)$ (see \cref{lem:eps-kernel-hausdorff}). Being an $\varepsilon$-kernel is invariant under taking arbitrary linear transformations, just like the Nikodym metric, and any $\varepsilon$-kernel of $K$ approximates it up to relative Nikodym distance $O(\varepsilon)$ too (see \cref{prop:eps-kernel-volumetric-approximation}), yet the converse does not hold.

    The starting point towards algorithmically constructing such $\varepsilon$-kernels is an early paper by Dudley~\cite{dudley1974metric,har2019proof}, which proves that every convex body $K \subseteq B_d$ has an approximate polytope $\widetilde{K} \subseteq K$ with Hausdorff distance at most $\varepsilon$ and only $O(\varepsilon^{-(d-1)/2})$ faces. Independently, Brohnsteyn and Ivanov~\cite{bronshteyn1975approximation} proved that a similar approximate polytope exists with only $O(\varepsilon^{-(d-1)/2})$ vertices. Subsequently, in the more restricted setting where $K$ is well-rounded, i.e., $B_d \subseteq K \subseteq RB_d$, Agarwal, Har-Peled and Varadarajan~\cite{agarwal2004approximating} used Brohnsteyn and Ivanov's construction to produce an $\varepsilon$-kernel with $O((R/\varepsilon)^{(d-1)/2})$ vertices, and their algorithm was subsequently improved independently by \cite{chan2006faster} and \cite{yu2008practical}. The referenced resources only explicitly analyze their algorithms' performance in terms of the runtime in the case where $K$ is the convex hull of $n$ points, with \cite{yu2008practical} ultimately achieving a runtime of $\widetilde{O}(n + \varepsilon^{-1/2})$ for $d = 2$ and $\widetilde{O}(n + \varepsilon^{d-2})$ for $d \geq 3$. In this work, we build on these algorithmic ideas and port them to the membership oracle setting.

    \paragraph{Our results.}

    In this paper, we tightly characterize the membership oracle query complexity for the convex set estimation and volume estimation problems, in the low-dimensional limit, and in the deterministic, randomized and quantum models. The resulting complexities are displayed in \cref{tbl:results}.

    \begin{table}[!ht]
        \centering
        \begin{tabular}{r|cc|c}
            Problem & \multicolumn{2}{c|}{Convex set estimation} & Volume estimation \\
            Error metric & Constructing $\varepsilon$-kernel & Rel.\ Nikodym distance $\varepsilon$ & Rel.\ error $\varepsilon$ \\\hline
            Deterministic & $\widetilde{\Theta}(\varepsilon^{-\frac{d-1}{2}})$ & $\widetilde{\Theta}(\varepsilon^{-\frac{d-1}{2}})$ & $\widetilde{\Theta}(\varepsilon^{-\frac{d-1}{2}})$ \\
            Randomized & $\widetilde{\Theta}(\varepsilon^{-\frac{d-1}{2}})$ & $\widetilde{\Theta}(\varepsilon^{-\frac{d-1}{2}})$ & $\widetilde{\Theta}(\varepsilon^{-\frac{2(d-1)}{d+3}})$ \\
            Quantum & $\widetilde{\Theta}(\varepsilon^{-\frac{d-1}{2}})$ & $\widetilde{\Theta}(\varepsilon^{-\frac{d-1}{2}})$ & $\widetilde{\Theta}(\varepsilon^{-\frac{d-1}{d+1}})$ \\
        \end{tabular}
        \caption{Overview of the query complexity results obtained in this paper for the convex set estimation and volume estimation problems. We only track the dependence on $1/\varepsilon$ and $R$, the prefactors are allowed to depend exponentially on the dimension $d$. The tilde hides polylogarithmic factors in $1/\varepsilon$ and $R$.}
        \label{tbl:results}
    \end{table}

    For the convex set estimation problem, we consider both the problem of constructing an $\varepsilon$-kernel, as well as estimating $K$ up to relative Nikodym distance $\varepsilon$. We obtain that the query complexities of both problems are $\widetilde{\Theta}(\varepsilon^{-(d-1)/2})$ in all three computational models. This shows in particular that randomness and quantumness alike do not provide any benefit over the deterministic model. Furthermore, for estimating a convex set up to relative Nikodym error, having access to a membership oracle is strictly more powerful than uniform sampling from the convex body, as we beat the $\Theta(\varepsilon^{-(d+1)/2})$ samples that are required in that setting~\cite{brunel2016adaptive}.

    For the volume estimation problem, we plot the obtained complexities in \cref{fig:complexities-graph}. For any fixed dimension $d$, we beat the previously best-known algorithms in the randomized and quantum settings, that respectively make $O(1/\varepsilon^2)$ and $O(1/\varepsilon)$ queries. The exponents in our complexities converge to these naive bounds as $d$ increases, showing that the obtained advantage becomes ever smaller with increasing $d$. Moreover, the gap between the randomized and quantum query complexity is small when $d$ is small, and becomes bigger as $d$ increases, converging to a full quadratic separation in the regime where $d$ is large. Like in the uniform sampling setting, the volume estimation problem is significantly easier than the convex set estimation problem in the randomized model.

    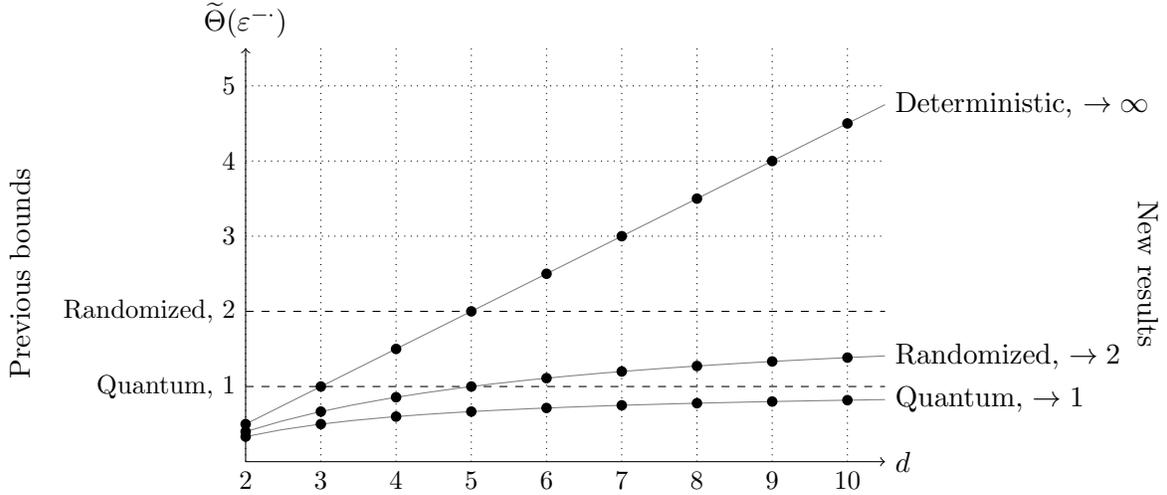
\begin{figure}[!ht]
        \centering
        \begin{tikzpicture}
            \draw[->] (2,0) to (2,5.5) node[above] {$\widetilde{\Theta}(\varepsilon^{-\cdot})$};
            \draw[->] (2,0) to (10.5,0) node[right] {$d$};
            \draw[domain=2:10.5, samples=19, variable=\x, gray] plot ({\x},{(\x-1)/2}) node[right] {\color{black}Deterministic, $\rightarrow \infty$};
            \draw[domain=2:10.5, samples=19, variable=\x, gray] plot ({\x},{2*(\x-1)/(\x+3)}) node[right] {\color{black}Randomized, $\rightarrow 2$};
            \draw[domain=2:10.5, samples=19, variable=\x, gray] plot ({\x},{(\x-1)/(\x+1)}) node[right] {\color{black}Quantum, $\rightarrow 1$};
            \foreach \x in {2,...,10} {
                \draw[dotted] (\x,0) node[below] {\small$\x$} to (\x,5.5);
                \fill (\x,{(\x-1)/2}) circle[radius=.07];
                \fill (\x,{2*(\x-1)/(\x+3)}) circle[radius=.07];
                \fill (\x,{(\x-1)/(\x+1)}) circle[radius=.07];
            }
            \foreach \y in {3,...,5} {
                \draw[dotted] (2,\y) node[left] {\small$\y$} to (10.5,\y);
            }
            \draw[dashed] (2,1) node[left] {\small Quantum, 1} to (10.5,1);
            \draw[dashed] (2,2) node[left] {\small Randomized, 2} to (10.5,2);
            \node[rotate=90] at (-1,2.5) {Previous bounds};
            \node[rotate=-90] at (14,2.5) {New results};
        \end{tikzpicture}
        \caption{Graph of the exponents in the query complexities for the volume estimation problem. $d$ is fixed, and the asymptotic limit is for $R \to \infty$ and $\varepsilon \downarrow 0$. The tilde hides polylogarithmic factors in $1/\varepsilon$ and $R$. The dashed lines represent the previously best-known results, and the solid ones connect the newly-found complexities.}
        \label{fig:complexities-graph}
    \end{figure}

    \paragraph{Techniques.}

    Our algorithms are based on the approaches taken in \cite{agarwal2004approximating,chan2006faster,yu2008practical}, and they all proceed in a similar manner. First, we apply an existing, deterministic rounding procedure~\cite[Theorem~4.6.1]{grotschel2012geometric} that makes $\widetilde{O}(1)$ queries and transforms the convex body into a ``well-rounded'' one, i.e., satisfying $B_d \subseteq K \subseteq R B_d$ with $R \in O(1)$ when $d$ is fixed.

    Then, we follow \cite{yu2008practical}, and take a set $\{v_j\}_{j=1}^n$ of $n$ roughly equally spaced points on the boundary of $(R+1)B_d$, with $n \in \Theta(\varepsilon^{-(d-1)/2})$. Every point $v_j$ is subsequently projected onto the convex body, and the convex hull of the resulting points forms an $\varepsilon$-kernel of $K$, as was shown in \cite{agarwal2004approximating}. Since these projections cannot be implemented directly in our model, we formulate every projection operation as a convex optimization problem, and use existing solvers from \cite{grotschel2012geometric} to obtain an approximate projection. It was already shown in \cite{chan2006faster} and \cite[Theorem~1]{yu2008practical} that this suffices to obtain an $\varepsilon$-kernel $\widetilde{K}$ of $K$. It also follows that $\widetilde{K}$ is an $\varepsilon$-precise approximation of $K$ in the relative Nikodym metric.

    Next, we find a speed-up of this approach for the volume estimation problem. To that end, we first run the convex set estimation procedure to some precision $\varepsilon' > \varepsilon$, to obtain an $\varepsilon'$-kernel $\underline{K}$ of $K$. We slightly enlarge the inner approximation $\underline{K} \subseteq K$ to generate an outer approximation $\overline{K} \supseteq K$ of $K$, with the property that $\Vol(\overline{K} \setminus \underline{K})/\Vol(K) \in O(\varepsilon')$. Then, in the randomized setting, we sample $\Theta((\varepsilon'/\varepsilon)^2)$ times uniformly from the region $\overline{K} \setminus \underline{K}$, and compute the fraction of samples $\xi$ that is inside $K$. We then use this as a refinement to our estimate of $\Vol(K)$, i.e., we output $\Vol(\underline{K}) + \xi\Vol(\overline{K} \setminus \underline{K})$. In the quantum case, we employ the same idea, but this time we use quantum amplitude estimation to quadratically speed up the computation of $\xi$. In both cases, balancing the number of queries made in the first and second step yields the complexity bounds from \cref{tbl:results}.

    The matching lower bounds for the query complexities follow from a construction that dates back to Dudley~\cite{dudley1974metric}, where it was used to prove a lower bound on the number of facets needed for approximating convex bodies by polytopes. The construction packs $n$ disjoint, identical spherical caps on the boundary of the unit ball through a $\delta$-net, for some suitable value of $\delta$ that depends on $d$ and $\eps$. Then, a convex body $K_x \subseteq B_d$ is associated to a bitstring $x$ of length $n$ by including the $j$th spherical cap in $K_x$ if and only if $x_j = 1$. See \cref{fig:lb-construction} for an illustration.

    The core observation is that estimating $K_x$, or its volume, becomes equivalent to learning a fraction of the entries of the bitstring $x$, or its Hamming weight, respectively. Additionally, a membership query to $K_x$ can be simulated with a single query to $x$. Recovering a constant fraction of bits of an $n$-bit string is known to cost $\Theta(n)$ queries in all three models that we consider. Similarly, the problem of estimating the Hamming weight of a bitstring is also known as the approximate counting problem and its (deterministic, randomized and quantum) query complexity has been tightly characterized in earlier work. We show that by carefully choosing the size of the spherical caps, this construction provides matching lower bounds for the query complexities from \cref{tbl:results} up to polylogarithmic factors in $R$ and $\varepsilon$ (and large factors in the dimension~$d$).

    \paragraph{History in the high-dimensional setting.}
    For completeness, we also comment on volume estimation in the high-dimensional setting, which tracks the dependency of the query complexity on the dimension $d$. The title of this work is a reference to the early survey by Simonovits called ``How to compute the volume in high dimension?''~\cite{simonovits2003compute}, and we refer the interested reader to this survey for a more in-depth discussion of this setting.
    The main takeaways in this setting are the following:
    \begin{itemize}
        \item[(i)]
        In the \emph{deterministic} setting, any deterministic algorithm must make a number of queries that is exponential in~$d$ \cite{elekes1986geometric,barany1987computing}.
        \item[(ii)]
        In the \emph{randomized} setting, there famously exists a Markov chain Monte Carlo algorithm that makes a number of queries polynomially in $d$ and $1/\varepsilon$ \cite{dyer1991random}.
        While the initial bound scaled as $\widetilde{O}(d^{23}/\varepsilon^2)$, a long of research led to the current best bound that is $\widetilde{O}(d^3/\varepsilon^2)$.
        This bound follows from combining algorithms in \cite{CV18,jia2021reducing} with recent breakthroughs on the so-called KLS-constant~(see e.g.~\cite{klartag2022bourgain}), and it approaches a lower bound of $\widetilde{\Omega}(d^2)$ on the randomized query complexity \cite{rademacher2008dispersion}.
        \item[(iii)] In the \emph{quantum} setting, Chakrabarti, Childs, Hung, Li, Wang and Wu \cite{chakrabarti2023quantum} gave an improved bound of the form $\widetilde{O}(d^3 + d^{2.5}/\varepsilon)$.\footnote{In \cite[Page~20:11]{chakrabarti2023quantum}, the authors claim that recent breakthroughs in the KLS-conjecture~\cite{chen2021almost,jambulapati2022slightly,klartag2023logarithmic} improve the analysis of their volume estimation algorithm to $\widetilde{O}(d^{2.5+o(1)} + d^{2+o(1)}/\varepsilon)$ membership oracle queries. They argue that the mixing time of the hit-and-run walk is $\widetilde{O}(d^2\psi^2)$, where $\psi$ is the KLS-constant. This indeed appears to be a correct upper bound on the mixing time of the \textit{ball walk}, see \cite[Theorem~2.7]{jia2021reducing}, but, as far as we are aware, the mixing time of the hit-and-run walk is not known to be improved due to the resolution of the KLS-conjecture. Adapting the algorithm of \cite{chakrabarti2023quantum} to make use of the ball walk is highly non-trivial. Consequently, the question of whether volume estimation can be done in $\widetilde{O}(d^{2.5+o(1)} + d^{2+o(1)}/\varepsilon)$ quantum queries to the membership oracle is still open.} This was later improved to $\widetilde{O}(d^3 + d^{2.25}/\varepsilon)$ by Cornelissen and Hamoudi \cite{cornelissen2023sublinear}. Both use quantum algorithms to speed up the Markov chain Monte Carlo method.
        The authors of \cite{chakrabarti2023quantum} also prove that the query complexity is $\Omega(\sqrt{d})$ when $\varepsilon$ is constant, and $\Omega(1/\varepsilon)$ in the regime where $1/d \leq \varepsilon \leq 1/3$, which in turn implies $\Omega(d)$ when $\varepsilon = 1/d$.\footnote{In \cite{chakrabarti2023quantum}, the authors additionally informally suggest a lower bound of $\Omega(1/\varepsilon)$ in general, but the corresponding theorem statement excludes the limit $\varepsilon \downarrow 0$. A lower bound of $\Omega(1/\varepsilon)$ for fixed $d$ and in the limit where $\varepsilon \downarrow 0$ would indeed contradict our results.}
    \end{itemize}

    \paragraph{Organization.}

    In \cref{sec:preliminaries}, we fix notation, formally define algorithmic models and computational problems, and state results from (computational) geometry. Subsequently, in \cref{sec:algorithms} we develop the algorithms, and in \cref{sec:lower-bounds} we prove the corresponding lower bounds.

    \section{Preliminaries}
    \label{sec:preliminaries}

    \subsection{Notation}

    We start by fixing some notation. Let $\N$ be the set of all positive integers. For all $n \in \N$, let $[n] = \{1, \dots, n\}$, and $[n]_0 = \{0\} \cup [n]$. For any $k \in \N$ (or $x \in \R$), we let $\N_{\geq k}$ (resp.\ $\R_{\geq x}$) be the set of all integers (resp.\ reals) that are bigger than or equal to $k$ (resp.~$x$). We denote the Euclidean norm by $\norm{\cdot}$. For every $d \in \N$, we denote the unit ball in $d$ dimensions by $B_d = \{x \in \R^d : \norm{x} \leq 1\}$. We write $\Gamma_d = \Vol(B_d)$ for the volume of $B_d$. We denote the unit sphere in $d$ dimensions by $S_{d-1} = \partial B_d$. For two sets  $A$ and $B$ we let $A \Delta B = (A \setminus B) \cup (B \setminus A)$ be the symmetric difference of $A$ and $B$.

    A set $K \subseteq \R^d$ is convex if for any two points $x,y \in K$, the straight line segment between $x$ and $y$ is fully contained in $K$, i.e., for every $\lambda \in [0,1]$, $\lambda x + (1-\lambda)y \in K$. For two convex sets $K,K' \subseteq \R^d$, we recall that $K \cap K'$ and $K + K' = \{x+y : x \in K, y \in K'\}$ are again convex. For ease of notation, we typically make an implicit assumption that all convex bodies are closed.

    We use big-$O$-notation to hide constant factors. That is, for two functions $f,g : \R_{\geq 0}^d \to \R_{\geq 0}$, we write $f \in O(g)$ if there exist $C,r > 0$ such that $f(x) \leq Cg(x)$ whenever $\norm{x} \geq r$. We write $f \in \Omega(g)$ if $g \in O(f)$, and we write $f \in \Theta(g)$ if $f \in O(g) \cup \Omega(g)$. Sometimes, the limit in which the big-$O$-notation is to be interpreted plays an important role. For instance if the big-$O$-notation holds in the limit where $\varepsilon \downarrow 0$, then we interpret the function $f,g$ as functions in the variable $1/\varepsilon$.

    Similarly, we use the big-$O$-tilde-notation to hide polylogarithmic factors. That is, if $f,g : \R_{\geq 0}^d \to \R_{\geq 0}$, then we write $f \in \widetilde{O}(g)$ if there exist constants $C,k,r \geq 0$ such that $f(x) \leq Cg(x)\prod_{j=1}^d\log^k(x_j)$ whenever $\norm{x} \geq r$. We write $f \in \widetilde{\Omega}(g)$ if $g \in \widetilde{O}(f)$, and $f \in \widetilde{\Theta}(g)$ if $f \in \widetilde{O}(g) \cup \widetilde{\Omega}(g)$.

    \subsection{Access model and computational models}

    The input to the algorithms we design in this work are a dimension $d \in \N$, outer radius $R \geq 1$, precision $\varepsilon \in (0,1)$, and a convex body $K \subseteq \R^d$. The \textit{access model} specifies how the algorithm can access this input. Throughout, we will assume that the parameters $d$, $R$ and $\varepsilon$ are set beforehand, and hence are known to the algorithm. As for the convex body $K$, many ways to access it have been considered in the literature. We refer to \cite[Chapter~2]{grotschel2012geometric} for an overview of several different access models. In this work, we assume that it can be accessed by means of a (strong) \textit{membership oracle}, in line with \cite[Definition~2.1.5]{grotschel2012geometric}.

    \begin{definition}[Membership oracle]
        \label{def:membership-oracle}
        Let $d \in \N$, and $K \subseteq \R^d$ be a convex body. A \textit{membership oracle to $K$} is a procedure that, given a point $x \in \R^d$, outputs whether $x \in K$.
    \end{definition}

    In general, an \textit{oracle} is a subroutine that the algorithm can run only as a black box, i.e., it has no more information about what happens when it is run. One such call to the oracle is referred to as a query. The \textit{query complexity} of a problem is the minimum number of queries any algorithm needs to make to solve said problem.

    Additional to the access model, we also specify \textit{computational models} that define the operations that the algorithms are allowed to use. We consider three different computational models in this work. In the \textit{deterministic model}, the algorithm is completely described beforehand and follows a deterministic computational path. In the \textit{randomized model}, the algorithm is allowed to use randomness during the execution of the algorithm, and in particular let the oracle's inputs depend on it. It is required to output a correct answer with probability at least $2/3$. In the \textit{quantum model}, the algorithm is additionally allowed to perform operations on a quantum state, and supply the oracle's inputs in a superposition, which it answers coherently. For a more detailed introduction to quantum algorithms, we refer to~\cite{nielsen2010quantum}.

    Throughout the execution of our algorithms, we additionally assume that we can store vectors in $\R^d$ precisely in memory in all three models, and that we can perform basic arithmetic with them exactly. We also assume in the randomized model that we can sample from any distribution on $\R^d$ as long as we have a classical description of it, and we use this to sample uniformly from an arbitrary region in \cref{thm:randomized-volume-estimation}. Similarly, in the quantum model, we assume that we can generate a uniform superposition over an arbitrary region in $\R^d$, and we use this in \cref{thm:quantum-volume-estimation}. Both of these operations do not depend on $K$, so in an implementation of the algorithms presented in this work, one can suppress the finite-precision errors that arise from these assumptions arbitrarily without increasing the number of membership oracle queries.

    \subsection{Rounding convex bodies}
    \label{subsec:rounding}

    We start by describing a routine that ``rounds'' the convex body. A similar idea appears in \cite{agarwal2004approximating,chan2006faster,yu2008practical}, but in contrast to their approach, we use inner and outer ellipses, rather than inner and outer cubes in our rounding procedure. This difference is not fundamental, it merely allows us to use a rounding routine that already exists in the literature~\cite{grotschel2012geometric}.

    The aim of ``rounding'' a convex body $K \subseteq \R^d$ is to find an invertible affine linear map $L : x \mapsto x_0 + Tx$, such that $B_d \subseteq L(K) \subseteq R'B_d$ with $R' \geq 1$ as small as possible. Intuitively, one can think of rounding as finding a linear transformation that compresses the convex body as much as possible, so that it does not have parts that stick out far from the origin. Most of the existing literature focuses on randomized algorithms to compute this affine linear map~\cite{lovasz2006simulated,jia2021reducing}, with the current state-of-the-art obtaining $R' \in \widetilde{O}(\sqrt{d})$ and requiring $\widetilde{O}(d^3)$ queries. In our setting, though, we require a deterministic procedure, and since we take $d$ to be a constant we don't necessarily need $R'$ to scale optimally in the dimension. Below, we sketch how to obtain a deterministic rounding procedure that obtains $R' \in O(d^3)$ and runs in time $O(\poly(d,\log(R)))$.

    \begin{theorem}[{\cite{grotschel2012geometric}}]
        \label{thm:rounding}
        Let $d \in \N$, $R \geq 1$, and let $K \subseteq \R^d$ be convex such that $B_d \subseteq K \subseteq RB_d$. Then, there is a deterministic algorithm that makes $O(\poly(d,\log(R))$ queries, and finds an invertible affine linear map $L$ such that $B_d \subseteq L(K) \subseteq R'B_d$, with $R' = d(d+1)^2 \in O(d^3)$. When $d$ is fixed, the algorithm makes $\widetilde{O}(1)$ queries, and $R' \in O(1)$.
    \end{theorem}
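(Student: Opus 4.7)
The plan is to reduce the task to computing an approximate \emph{L\"owner--John ellipsoid} for $K$ and then use that ellipsoid to build $L$. Recall that for any convex body $K$ there is an inscribed ellipsoid $E \subseteq K$ (the John ellipsoid) such that $K \subseteq d\cdot E$, where scaling is about the center of $E$. If we could compute such an $E$ exactly and write it as $E = \{x_0 + Ty : \norm{y}\leq 1\}$ for some invertible $T$, then the affine map $L(x) = T^{-1}(x-x_0)$ would send $E$ to $B_d$ and hence satisfy $B_d = L(E) \subseteq L(K) \subseteq d\cdot B_d$. Since we only have a membership oracle and cannot compute $E$ exactly, we instead invoke a deterministic \emph{weak} version of this, namely the shallow-cut ellipsoid method of Gr\"otschel, Lov\'asz and Schrijver.

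Concretely, I would invoke the algorithm of \cite[Theorem 4.6.3]{grotschel2012geometric} (or the closely related Theorem~4.6.1 cited earlier): given a weak membership oracle for a convex body $K$ with $B_d \subseteq K \subseteq RB_d$, it deterministically produces, using $\poly(d,\log R)$ oracle queries and arithmetic operations, a center $x_0 \in \R^d$ and a positive-definite matrix $T \in \R^{d \times d}$ such that the ellipsoid $E = \{x_0 + Ty : \norm{y} \leq 1\}$ satisfies $E \subseteq K \subseteq d(d+1)^2 \cdot E$, where again the outer scaling is about the center $x_0$. Defining $L(x) := T^{-1}(x-x_0)$ then yields $L(E) = B_d$, so $B_d \subseteq L(K)$, and $L(K) \subseteq d(d+1)^2\, B_d$, giving $R' = d(d+1)^2$ as claimed. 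For fixed $d$, the polynomial in $d$ is absorbed into the hidden constant and the dependence on $R$ collapses to $\widetilde{O}(\log R) = \widetilde{O}(1)$ (since $R$ itself is polynomially bounded in the input size we care about), yielding $R' \in O(1)$.

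The genuine content lies entirely in the cited theorem, so the ``obstacle'' in the proof is conceptual: the ellipsoid method, as usually presented, wants a separation oracle, not a membership oracle. GLS overcome this by the \emph{shallow-cut} variant, which only ever needs to test whether a carefully chosen point lies in $K$ (together with a slightly enlarged/shrunken version of $K$), exploiting precisely the sandwich hypothesis $B_d \subseteq K \subseteq RB_d$ to keep the ellipsoid volumes from degenerating. Since that machinery is already packaged as a black-box theorem in \cite{grotschel2012geometric}, the present statement reduces to checking that our hypotheses match the hypotheses of that theorem and to tracking the approximation factor $d(d+1)^2$ through the reduction; both are routine.
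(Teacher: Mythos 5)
Your proposal is correct and follows essentially the same route as the paper: both invoke the Gr\"otschel--Lov\'asz--Schrijver shallow-cut machinery to obtain an approximate L\"owner--John ellipsoid $E \subseteq K \subseteq d(d+1)^2 E$ and then take $L$ to be the affine map sending $E$ to $B_d$. The only cosmetic difference is that the paper first converts the membership oracle into a weak separation oracle explicitly (via the oracle reductions in \cite[Figure~4.1]{grotschel2012geometric}, at a $\poly(d,\log R)$ multiplicative cost) before applying Theorem~4.6.1, whereas you fold that reduction into the cited black-box theorem.
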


    \begin{proof}
        Since $K \subseteq \R^d$ satisfies $B_d \subseteq K \subseteq RB_d$, we observe from \cite[Figure~4.1]{grotschel2012geometric} that we can deterministically turn a membership oracle into a weak separation oracle. The total multiplicative factor incurred from this conversion is $O(\poly(d,\log(R)))$, and hence is $\widetilde{O}(1)$ in the case where $d$ is fixed.

        Subsequently, from \cite[Theorem~4.6.1]{grotschel2012geometric}, we find that with $O(\poly(d,\log(R)))$ calls to a weak separation oracle to $K$, we can find an ellipsoid $E$, such that $E \subseteq K \subseteq d(d+1)^2E$. Thus, by letting $L$ be the affine linear transformation that maps $E$ to the unit ball, we obtain that
        \[B_d \subseteq L(K) \subseteq R'B_d, \qquad \text{where} \qquad R' = d(d+1)^2 \in O(d^3).\]
        For fixed $d$, we obtain $R' \in O(1)$, and the query complexity of finding $L$ is in $\widetilde{O}(1)$.
    \end{proof}

    \subsection{Geometry}

    We proceed by recalling some theoretical background in geometry. We start by formally introducing the concept of a net.

    \begin{definition}[$\delta$-net]
        Let $d \in \N$, $\delta > 0$, and let $S \subseteq \R^d$ be any set. We say that $N \subseteq S$ is a $\delta$-net of $S$ if
        \begin{enumerate}[nosep]
            \item For any two distinct points $v,w \in N$, we have $\norm{v-w} \geq \delta$.
            \item For any $v \in S$, there exists a $w \in N$ such that $\norm{v - w} \leq \delta$.
        \end{enumerate}
    \end{definition}

    Intuitively, one can think of a $\delta$-net $N$ as the centers of a set of balls of radius $\delta$, that together cover the set $S$. Moreover, since these centers are at least separated by distance $\delta$, the balls with radius $\delta/2$ centered at the points in $N$ must be disjoint up to a measure-zero set. By comparing the surface area of the sphere to that of $\{u \in S_{d-1}: \|u-v\| \leq \delta\}$ (for some fixed $v \in S_{d-1}$), the following bound follows on the number of points in a $\delta$-net on the sphere. We stress that the assumption that $d$ is fixed is crucial in the above proposition. Indeed, the $\Theta$-notation hides a prefactor that might depend exponentially on $d$.

    \begin{proposition}
        \label{lem:num-patches-lb}
        Let $d \in \N$ be fixed. Let $\delta \in (0,1)$, and let $N_{\delta}$ be a $\delta$-net of $S_{d-1}$. Then, $|N_{\delta}| \in \Theta(\delta^{-(d-1)})$.
    \end{proposition}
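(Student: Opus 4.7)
The plan is to execute the volume-comparison argument that the paragraph preceding the statement already sketches informally. For $v \in S_{d-1}$ and $r \in (0,2]$, write $C(v,r) = \{u \in S_{d-1} : \|u-v\| \leq r\}$ for the spherical cap of Euclidean radius $r$ around $v$. The entire proof reduces to two ingredients: an area estimate $\Area(C(v,r)) \in \Theta(r^{d-1})$ for $r \in (0,1]$, and a disjointness/covering argument.

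First I would compute, or quote from a standard reference, the surface area of a spherical cap. Using the identity $\|u-v\|^2 = 2 - 2\langle u,v\rangle$ for $u,v \in S_{d-1}$, the cap $C(v,r)$ coincides with $\{u \in S_{d-1} : \langle u,v\rangle \geq 1-r^2/2\}$, which in spherical coordinates about the axis through $v$ has angular radius $\theta_r = 2\arcsin(r/2)$. Its area equals $c_{d-2} \int_0^{\theta_r} \sin^{d-2}(\phi)\,d\phi$, where $c_{d-2} = \Area(S_{d-2})$. Since $d$ is fixed and $\sin\phi \in \Theta(\phi)$ uniformly on $[0,\pi/3]$, for $r \in (0,1]$ we have $\theta_r \in \Theta(r)$ and hence $\Area(C(v,r)) \in \Theta(r^{d-1})$, with prefactors depending only on $d$.

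Next I would derive the two bounds. For the upper bound, the separation property of the net implies that the caps $\{C(v,\delta/2) : v \in N_\delta\}$ are pairwise disjoint up to a measure-zero set, because if $u \in C(v,\delta/2) \cap C(w,\delta/2)$ for distinct $v,w \in N_\delta$ then $\|v-w\| \leq \delta$ with equality, forcing $u$ onto the boundary. Summing areas gives $|N_\delta| \cdot \Theta((\delta/2)^{d-1}) \leq \Area(S_{d-1})$, hence $|N_\delta| \in O(\delta^{-(d-1)})$. For the lower bound, the covering property yields $S_{d-1} \subseteq \bigcup_{v \in N_\delta} C(v,\delta)$, so $\Area(S_{d-1}) \leq |N_\delta| \cdot \Theta(\delta^{d-1})$, giving $|N_\delta| \in \Omega(\delta^{-(d-1)})$. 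Combining the two inequalities yields $|N_\delta| \in \Theta(\delta^{-(d-1)})$.

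There is no real obstacle here; the only delicate point is ensuring the cap-area estimate is valid uniformly in $\delta \in (0,1)$ rather than only in the small-$\delta$ regime, which is handled by noting that $\sin^{d-2}\phi$ is bounded above and below by constant multiples of $\phi^{d-2}$ on any fixed subinterval of $(0,\pi/2)$, with the constants depending only on $d$ (which is allowed, as the $\Theta$ notation is interpreted for fixed $d$ and $\delta \downarrow 0$, and any finitely many $\delta$ bounded away from $0$ affect only the hidden constants).
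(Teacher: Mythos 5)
Your proof is correct and follows essentially the same route as the paper's: compare the total surface area of $S_{d-1}$ against the areas of the caps of radius $\delta/2$ (pairwise disjoint by the separation property) and radius $\delta$ (covering by the net property), using $\Area \in \Theta(r^{d-1})$ for a cap of Euclidean radius $r$ with $d$ fixed. The extra detail you supply (the spherical-coordinates computation of the cap area and the measure-zero overlap argument) just fills in steps the paper leaves implicit.
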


    \begin{proof}
        For any $v \in V$ and $r > 0$, let $R_{v,r} = \{w \in S_{d-1} : \norm{v-w} \leq r\}$. The surface areas of $R_{v,r}$ is $A_{v,r} \in \Theta(r^{d-1})$ and the surface area of $S_{d-1}$ is $A_{d-1} \in \Theta(1)$. Since $N_{\delta}$ is a $\delta$-net, all $R_{v,\delta/2}$'s with $v \in N_{\delta}$ are disjoint up to possibly some measure-zero set, and all $R_{v,\delta}$'s cover $S_{d-1}$. Thus, we find $|N_{\delta}|A_{v,\delta/2} \leq A_{d-1} \leq |N_{\delta}|A_{v,\delta}$, from which we conclude that $|N_{\delta}| \in \Theta(\delta^{-(d-1)})$.
    \end{proof}

    We proceed with formally defining an $\varepsilon$-kernel.

    \begin{definition}[$\varepsilon$-kernel~{\cite{agarwal2004approximating}}]
        \label{def:eps-kernel}
        Let $d \in \N$, and $\varepsilon \in (0,1)$. Let $K \subseteq \R^d$ be a convex body. Then a convex body $\widetilde{K} \subseteq K$ is an $\varepsilon$-kernel of $K$ if
        \[\forall u \in S_{d-1}, \qquad \max_{x \in \widetilde{K}} u^Tx - \min_{x \in \widetilde{K}} u^Tx \geq (1-\varepsilon)\left[\max_{x \in K} u^Tx - \min_{x \in K} u^Tx\right].\]
    \end{definition}

    One very desirable property of $\varepsilon$-kernels is its invariance under affine linear transformations.

    \begin{lemma}[{\cite{agarwal2004approximating},\cite[Lemma~2.5]{agarwal2024computing}}]
        \label{prop:eps-kernel-affine-transform}
        Let $d \in \N$, $\varepsilon \in (0,1)$, and let $L$ be an invertible affine linear map on $\R^d$. Let $\widetilde{K},K \subseteq \R^d$ be convex sets. Then, $\widetilde{K}$ is an $\varepsilon$-kernel of $K$ if and only if $L(\widetilde{K})$ is an $\varepsilon$-kernel of $L(K)$.
    \end{lemma}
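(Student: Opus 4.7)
The plan is to check both conditions in the $\varepsilon$-kernel definition under the affine map and show they transform in a way that the two conditions are equivalent. Write $L(x) = x_0 + Tx$ with $T \in \R^{d \times d}$ invertible, so that $L(K) = x_0 + T(K)$ and similarly for $\widetilde{K}$. The containment requirement $\widetilde{K} \subseteq K$ is equivalent to $L(\widetilde{K}) \subseteq L(K)$ because $L$ is a bijection, so only the width inequality needs work.

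The key computation is to track the support width under $L$. For any $u \in \R^d$ and any convex body $C$, define $w_C(u) = \sup_{x \in C} u^T x - \inf_{x \in C} u^T x$; then a direct substitution gives
\begin{equation*}
    \sup_{y \in L(C)} u^T y = u^T x_0 + \sup_{x \in C} (T^T u)^T x,
\end{equation*}
and analogously for the infimum. Subtracting, the $u^T x_0$ terms cancel and we obtain the identity $w_{L(C)}(u) = w_C(T^T u)$. In particular, writing $v = T^T u$ and using that $w_C$ is positively homogeneous, we get
\begin{equation*}
    w_{L(C)}(u) = \|T^T u\|\, w_C\!\left(\tfrac{T^T u}{\|T^T u\|}\right)
\end{equation*}
for every $u \in S_{d-1}$ (note $T^T u \neq 0$ since $T$ is invertible).

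Now apply this to both $C = K$ and $C = \widetilde{K}$. The inequality $w_{L(\widetilde{K})}(u) \geq (1-\varepsilon)\, w_{L(K)}(u)$ that defines $L(\widetilde{K})$ being an $\varepsilon$-kernel of $L(K)$ becomes, after dividing through by the common nonzero factor $\|T^T u\|$,
\begin{equation*}
    w_{\widetilde{K}}\!\left(\tfrac{T^T u}{\|T^T u\|}\right) \geq (1-\varepsilon)\, w_K\!\left(\tfrac{T^T u}{\|T^T u\|}\right).
\end{equation*}
Since $T^T$ is invertible, the map $u \mapsto T^T u / \|T^T u\|$ is a surjection from $S_{d-1}$ onto $S_{d-1}$, so quantifying over all $u \in S_{d-1}$ on the left-hand side of the original $\varepsilon$-kernel inequality is the same as quantifying over all $v \in S_{d-1}$ in the defining condition for $\widetilde{K}$ being an $\varepsilon$-kernel of $K$. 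This gives the forward and backward implications simultaneously.

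The only subtlety is the bookkeeping with the affine shift (which cancels in the width) and the scalar $\|T^T u\|$ (which cancels across the inequality), so there is no real obstacle, only care with notation. One might also note that exactly the same argument shows the more symmetric statement that $L$-images of $\varepsilon$-kernels are $\varepsilon$-kernels of the $L$-image, without needing any assumption on $T$ beyond invertibility.
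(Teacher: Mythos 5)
Your proof is correct, and it is the standard argument from the cited references (the paper itself states this lemma with a citation to \cite{agarwal2004approximating} and \cite{agarwal2024computing} rather than proving it): the support-width identity $w_{L(C)}(u) = w_C(T^Tu)$, positive homogeneity to pull out the factor $\|T^Tu\|$, surjectivity of $u \mapsto T^Tu/\|T^Tu\|$ on $S_{d-1}$, and bijectivity of $L$ for the containment $\widetilde{K} \subseteq K$. No gaps.
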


    Next, we make a connection between $\varepsilon$-kernels and approximations w.r.t.\ the Hausdorff metric.

    \begin{lemma}
        \label{lem:eps-kernel-hausdorff}
        Let $d \in \N$, $R \geq 1$, $\varepsilon > 0$, and let $K, \widetilde{K} \subseteq \R^d$ be convex sets.
        \begin{enumerate}[nosep]
            \item If $B_d \subseteq K$ and $\widetilde{K} \subseteq K$ is an $\varepsilon$-precise Hausdorff approximation of $K$, then $\widetilde{K}$ is also an $\varepsilon$-kernel of $K$.
            \item If $K \subseteq RB_d$ and $\widetilde{K}$ is an $\varepsilon$-kernel of $K$, it is a $2\varepsilon R$-precise Hausdorff approximation of $K$.
        \end{enumerate}
    \end{lemma}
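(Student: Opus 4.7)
The plan is to phrase both the $\varepsilon$-kernel condition and the Hausdorff condition in terms of the support function $h_K(u) := \max_{x \in K} u^T x$. Two standard facts will do all the work: (i) the width of a convex body $K$ in direction $u$ equals $h_K(u) + h_K(-u)$, so \cref{def:eps-kernel} reads $h_{\widetilde K}(u) + h_{\widetilde K}(-u) \geq (1-\varepsilon)(h_K(u) + h_K(-u))$ for every $u \in S_{d-1}$; and (ii) if $\widetilde K \subseteq K$, the Hausdorff distance between the two bodies equals $\sup_{u \in S_{d-1}} (h_K(u) - h_{\widetilde K}(u))$, which follows from the characterization of the Hausdorff distance as the smallest $r \geq 0$ with $K \subseteq \widetilde K + r B_d$ together with the support-function identity $h_{A+rB_d}(u) = h_A(u) + r$.

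For part 1, I would translate the Hausdorff hypothesis via (ii) into $h_K(u) - h_{\widetilde K}(u) \leq \varepsilon$ for every unit $u$, apply the same inequality at $-u$, and add the two to obtain that the width of $\widetilde K$ in direction $u$ falls short of the width of $K$ by at most $2\varepsilon$. The containment $B_d \subseteq K$ forces every width of $K$ to be at least $2$, which upgrades this additive slack $2\varepsilon$ into the multiplicative slack $\varepsilon(h_K(u) + h_K(-u))$ required by the $\varepsilon$-kernel condition.

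For part 2, I would reverse this bookkeeping. Rewriting the $\varepsilon$-kernel condition shows that the sum of the two nonnegative quantities $h_K(u) - h_{\widetilde K}(u)$ and $h_K(-u) - h_{\widetilde K}(-u)$ is at most $\varepsilon(h_K(u) + h_K(-u)) \leq 2\varepsilon R$, where the last step uses $K \subseteq R B_d$. Each term is therefore individually at most $2\varepsilon R$; taking the supremum over $u \in S_{d-1}$ and invoking (ii) yields the Hausdorff bound. The only step that is not purely mechanical is this passage from a sum bound to a termwise bound, which relies crucially on the nonnegativity coming from $\widetilde K \subseteq K$ (built into the definition of $\varepsilon$-kernel); everything else is routine once the problem is expressed via support functions.
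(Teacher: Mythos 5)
Your proposal is correct and follows essentially the same route as the paper: your fact (ii) is exactly the paper's identity~\eqref{eq:hausdorff} for the Hausdorff distance of nested convex bodies, and both parts then compare directional widths, using $B_d \subseteq K$ to turn the additive slack $2\varepsilon$ into a multiplicative one, and $\widetilde{K} \subseteq K$ plus $K \subseteq RB_d$ to pass from the width deficit to the termwise Hausdorff bound. The only cosmetic difference is that you phrase the argument via support functions $h_K$ rather than the paper's explicit $\max/\min$ expressions.
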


    \begin{proof}
        Observe that if $\widetilde{K} \subseteq K$, then the Hausdorff distance $d$ between $\widetilde{K}$ and $K$ is
        \begin{equation}
            \label{eq:hausdorff}
            d = \max_{u \in S_{d-1}} \left[\max_{x \in K} u^Tx - \max_{x \in \widetilde{K}} u^Tx\right] = \max_{u \in S_{d-1}} \left[\min_{x \in \widetilde{K}} u^Tx - \min_{x \in K} u^Tx\right].
        \end{equation}
        For the first claim, we take $u \in S_{d-1}$ arbitrarily, and observe from \cref{eq:hausdorff} that
        \[\max_{x \in \widetilde{K}} u^Tx - \min_{x \in \widetilde{K}} u^Tx \geq \max_{x \in K} u^Tx - \min_{x \in K} u^Tx - 2\varepsilon \geq (1-\varepsilon)\left[\max_{x \in K} u^Tx - \min_{x \in K} u^Tx\right].\]
        For the second claim, let $d$ be the Hausdorff distance between $K$ and $\widetilde{K}$, and $u$ the vector that maximizes the middle expression in \cref{eq:hausdorff}. Then, we have
        \begin{align*}
            d &= \max_{x \in K} u^Tx - \max_{x \in \widetilde{K}} u^Tx \leq \left[\max_{x \in K} u^Tx - \min_{x \in K} u^Tx\right] - \left[\max_{x \in \widetilde{K}} u^Tx - \min_{x \in \widetilde{K}} u^Tx\right] \\
            &\leq \varepsilon\left[\max_{x \in K} u^Tx - \min_{x \in K} u^Tx\right] \leq \varepsilon \cdot \diam(K) \leq 2\varepsilon R.\qedhere
        \end{align*}
    \end{proof}

    Finally, we observe that an $\varepsilon$-kernel naturally provides a relative Nikodym approximation of $K$, and can be used to construct an outer approximation of $K$, in the following proposition.

    \begin{proposition}
        \label{prop:eps-kernel-volumetric-approximation}
        Let $d \in \N$, $\varepsilon \in (0,1/(4d(d+1)^2))$, and let $\widetilde{K}$ be an $\varepsilon$-kernel of a convex set $K \subseteq \R^d$. Then, given a full description of $\widetilde{K}$, we can construct a convex set $\overline{K} \supseteq K$ such that $\Vol(\overline{K} \setminus \widetilde{K})/\Vol(K) \leq (1+4\varepsilon d(d+1)^2)^d - 1 \in O(\varepsilon)$, with $\widetilde{O}(1)$ membership oracles to $K$. In particular, this implies that $\Vol(\widetilde{K} \Delta K)/\Vol(K) \in O(\varepsilon)$.
    \end{proposition}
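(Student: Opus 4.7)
The plan is to produce $\overline{K}$ by fattening $\widetilde{K}$ with a small Euclidean ball in a suitably rounded frame, and to control the resulting volume via a standard Minkowski-rescaling bound. First, I would invoke \cref{thm:rounding} on $K$, spending $\widetilde{O}(1)$ membership queries, to obtain an invertible affine map $L$ with $B_d \subseteq L(K) \subseteq R' B_d$ for $R' = d(d+1)^2$. By \cref{prop:eps-kernel-affine-transform}, $L(\widetilde{K})$ is still an $\varepsilon$-kernel of $L(K)$, and by \cref{lem:eps-kernel-hausdorff}(2) the Hausdorff distance between them is at most $2\varepsilon R'$. Consequently $L(K) \subseteq L(\widetilde{K}) + 2\varepsilon R' B_d$.

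I would then define $\overline{K} := L^{-1}\bigl(L(\widetilde{K}) + 2\varepsilon R' B_d\bigr)$, which is convex as the preimage under an affine map of a Minkowski sum of two convex sets, contains $K$ by the above inclusion, and is computable from $\widetilde{K}$ and $L$ alone, with no further oracle queries needed.

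The quantitative heart of the argument is to bound $\Vol(L(\widetilde{K}) + 2\varepsilon R' B_d) / \Vol(L(\widetilde{K}))$. For this I would use the standard fact that whenever a convex body $A$ contains a ball $B(v,c)$, the inclusion $A + r B_d \subseteq v + (1 + r/c)(A-v)$ holds; indeed, the preimage of any $x+y$ with $x \in A$, $y \in r B_d$ under the scaling is a convex combination of $x \in A$ and the point $v + (c/r)y \in B(v,c) \subseteq A$, so it lies in $A$ by convexity. This yields $\Vol(A + r B_d) \le (1 + r/c)^d \Vol(A)$. To exhibit a large enough inscribed ball in $L(\widetilde{K})$, I would combine $B_d \subseteq L(K)$ with the $\varepsilon$-kernel property to deduce that every width of $L(\widetilde{K})$ is at least $2(1-\varepsilon)$, and then apply Steinhagen's inradius--width inequality to conclude that the inradius $c$ of $L(\widetilde{K})$ satisfies $c = \Omega_d(1)$ (concretely $c \ge (1-\varepsilon)/\sqrt{d}$). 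The hypothesis $\varepsilon < 1/(4d(d+1)^2)$ then keeps $2\varepsilon R'/c$ bounded.

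Combining the above with the fact that affine maps preserve volume ratios yields
\[\frac{\Vol(\overline{K}\setminus\widetilde{K})}{\Vol(K)} \;=\; \frac{\Vol(L(\widetilde{K}) + 2\varepsilon R' B_d) - \Vol(L(\widetilde{K}))}{\Vol(L(K))} \;\le\; \left(1 + \frac{2\varepsilon R'}{c}\right)^{\!d} - 1 \;\in\; O(\varepsilon)\]
for fixed $d$, matching the claimed bound up to absorbing $d$-dependent constants into the stated form $(1+4\varepsilon d(d+1)^2)^d - 1$. The in-particular clause is then immediate: since $\widetilde{K}\subseteq K\subseteq\overline{K}$, the symmetric difference satisfies $\widetilde{K}\Delta K = K\setminus\widetilde{K} \subseteq \overline{K}\setminus\widetilde{K}$, so $\Vol(\widetilde{K}\Delta K)/\Vol(K) \in O(\varepsilon)$ too. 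The only delicate point I anticipate is tightening the inscribed-ball estimate so that the prefactor exactly matches the stated form (Steinhagen's inequality incurs an extra $\sqrt{d}$); I expect this can be handled by exploiting both $B_d \subseteq L(K)$ and $L(K) \subseteq R' B_d$ together in a more careful geometric argument, but for the $O(\varepsilon)$ conclusion only the order of the scaling factor matters.
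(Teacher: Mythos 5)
Your proposal is correct in substance and follows the same overall route as the paper: round via \cref{thm:rounding} with $\widetilde{O}(1)$ queries, use affine invariance (\cref{prop:eps-kernel-affine-transform}) and \cref{lem:eps-kernel-hausdorff} to get $L(K) \subseteq L(\widetilde{K}) + 2\varepsilon R' B_d$, and then bound the volume of the fattened body by comparing it to a dilation of $L(\widetilde{K})$; the convexity/query-count observations, the cancellation of the affine determinant in the volume ratio, the bound $\Vol(\widetilde{K}) \leq \Vol(K)$, and the in-particular clause via $K \setminus \widetilde{K} \subseteq \overline{K} \setminus \widetilde{K}$ all match. The one place you deviate is the inscribed-ball step: extracting the inradius of $L(\widetilde{K})$ from the width bound $2(1-\varepsilon)$ via Steinhagen's inequality loses a factor $\sqrt{d}$, so as written you only prove $\Vol(\overline{K}\setminus\widetilde{K})/\Vol(K) \leq (1 + 2\varepsilon R'\sqrt{d}/(1-\varepsilon))^d - 1 \in O(\varepsilon)$, not the exact bound $(1+4\varepsilon d(d+1)^2)^d - 1$ claimed in the statement (and quoted verbatim in the later choices of $\varepsilon'$, though there only the asymptotics matter). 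The ``delicate point'' you flag has a simpler resolution, which is exactly what the paper does: since $2\varepsilon R' < 1/2$, the inclusion $B_d \subseteq L(\widetilde{K}) + 2\varepsilon R' B_d$ already forces $\tfrac12 B_d \subseteq L(\widetilde{K})$ (compare support functions), whence $2\varepsilon R' B_d = 4\varepsilon R' \cdot \tfrac12 B_d \subseteq 4\varepsilon R' \, L(\widetilde{K})$ and so $L(\widetilde{K}) + 2\varepsilon R' B_d \subseteq (1+4\varepsilon R')L(\widetilde{K})$ by convexity; taking $\overline{K}$ to be the preimage of this dilation gives the stated constant with no Steinhagen inequality and no dependence on the center of the inscribed ball. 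So for the $O(\varepsilon)$ conclusion your argument is complete; to recover the displayed constant, replace the width/Steinhagen step by this half-ball containment.
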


    \begin{proof}
        Since being an $\varepsilon$-kernel and being a relative Nikodym-distance approximation are both invariant under affine linear transformations, we can without loss of generality first round the convex body $K$ using \cref{thm:rounding}, using just $\widetilde{O}(1)$ membership queries. It then suffices to consider the case where $B_d \subseteq K \subseteq RB_d$, with $R = d(d+1)^2 \in O(1)$. Since $\varepsilon < 1/(4R)$, we find by \cref{lem:eps-kernel-hausdorff} that $B_d \subseteq K \subseteq \widetilde{K} + 2\varepsilon RB_d \subseteq \widetilde{K} + B_d/2$, and so $B_d/2 \subseteq \widetilde{K}$. Thus,
        \[B_d/2 \subseteq \widetilde{K} \subseteq K \subseteq \widetilde{K} + 2\varepsilon RB_d \subseteq (1+4\varepsilon R)\widetilde{K} =: \overline{K},\]
        from which we find that
        \[\frac{\Vol(\overline{K} \setminus \widetilde{K})}{\Vol(K)} = \frac{\Vol(\overline{K}) - \Vol(\widetilde{K})}{\Vol(K)} = \frac{[(1+4\varepsilon R)^d - 1]\Vol(\widetilde{K})}{\Vol(K)} \leq (1+4\varepsilon R)^d - 1 \in O(\varepsilon).\qedhere\]
    \end{proof}

    \subsection{Problem definitions}

    We consider three computational problems in this paper. We formally introduce them here.

    \begin{definition}[Problem definitions]
        \label{def:problems}
        Let $d \in \N$, $R > 1$, and $\varepsilon \in (0,1)$. Let $K \subseteq \R^d$ be a convex body such that $B_d \subseteq K \subseteq RB_d$. We consider three problems:
        \begin{enumerate}[nosep]
            \item The \emph{$\varepsilon$-kernel construction problem} is the problem of outputting an $\varepsilon$-kernel of $K$.
            \item The \emph{$\varepsilon$-Nikodym construction problem} is the problem of outputting a convex body $\widetilde{K} \subseteq \R^d$ such that $\Vol(K \Delta \widetilde{K}) \leq \varepsilon\Vol(K)$.
            \item The \emph{volume estimation problem} is the problem of outputting a non-negative real $\widetilde{V} \geq 0$ such that $|\Vol(K) - \widetilde{V}| \leq \varepsilon\Vol(K)$.
        \end{enumerate}
    \end{definition}

    It follows directly that these problems are qualitatively decreasing in terms of their query complexity. That is, solving the $\varepsilon$-kernel construction problem also solves the Nikodym construction problem with precision $O(\varepsilon)$, by virtue of \cref{prop:eps-kernel-volumetric-approximation}, which in turn solves the volume estimation problem with the same precision, since we can simply output the volume of the approximation $\widetilde{K}$. These relations are less clear if one considers the runtime instead, e.g., since computing the volume might be very costly.

    \section{Algorithms}
    \label{sec:algorithms}

    \subsection{Convex set estimation}
    \label{subsec:alg-conv-est}

    In this subsection, we develop a deterministic algorithm that constructs an $\varepsilon$-kernel of a well-rounded convex set $K \subseteq \R^d$, using membership queries to it. The algorithm follows the same general strategy as in \cite{yu2008practical}, and we replace their approximate nearest-neighbor queries by a query-efficient approximate projection onto a convex body. For this, we use the well-known observation that projection onto a convex body can be phrased as a convex optimization problem, which we can deterministically solve approximately with only polylogarithmically many membership oracles calls, using for example the ellipsoid method~\cite{grotschel2012geometric}.

    \begin{proposition}
        \label{prop:alg-proj}
        Let $d \in \N$, $R \geq 1$, $\varepsilon > 0$, $K \subseteq \R^d$ convex such that $B_d \subseteq K \subseteq RB_d$. Let $x \in \R^d \setminus K$, and $y'$ its projection onto $K$. There is a deterministic algorithm that obtains an element $\widetilde{y} \in K$ such that $\norm{\widetilde{y} - y'} \leq \varepsilon$. The algorithm makes $O(\poly(d,\log(\norm{x}),\log(R),\log(1/\varepsilon)))$ queries to a membership oracle of $K$. When $d$ is fixed and $R \in O(1)$, this complexity is $\widetilde{O}(1)$.
    \end{proposition}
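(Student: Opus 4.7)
The plan is to phrase the projection as a convex minimization problem and invoke the weak optimization machinery of \cite{grotschel2012geometric}. I would define $f(y) = \norm{y - x}^2$, so that the true projection $y'$ is the unique minimizer of $f$ over the convex body $K$. First I would establish a strong-convexity-style inequality: by the first-order optimality condition for projection, $\langle y' - x, y - y' \rangle \geq 0$ for every $y \in K$, and expanding $\norm{y-x}^2 = \norm{y-y'}^2 + 2\langle y-y', y'-x\rangle + \norm{y'-x}^2$ yields
\[
f(y) - f(y') \geq \norm{y - y'}^2 \qquad \text{for all } y \in K.
\]
Consequently, it suffices to compute an additive $\varepsilon^2$-approximate minimizer $\widetilde{y} \in K$ of $f$, because then $\norm{\widetilde{y} - y'} \leq \varepsilon$ follows immediately.

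Next, I would invoke the weak optimization toolkit of \cite{grotschel2012geometric}. The hypothesis $B_d \subseteq K \subseteq RB_d$ makes $K$ well-rounded, so the membership oracle can be converted into a weak separation oracle using the same procedure invoked in the proof of \cref{thm:rounding}, at a multiplicative cost of $\poly(d,\log R)$. The ellipsoid method, applied with this weak separation oracle together with exact evaluation of $f$, then produces a point $\widetilde{y} \in K$ with $f(\widetilde{y}) \leq f(y') + \delta$ in $\poly(d,\log R,\log\norm{x},\log(1/\delta))$ membership queries; the $\log\norm{x}$ factor reflects the encoding length of the objective, whose values range up to $(R+\norm{x})^2$. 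Setting $\delta = \varepsilon^2$ and combining with the previous inequality yields the stated query bound, and for fixed $d$ and $R \in O(1)$ it collapses to $\widetilde{O}(1)$ in the sense of the tilde notation.

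The main obstacle I anticipate is that the weak optimization results in \cite{grotschel2012geometric} are classically stated for \emph{linear} objectives, whereas $f$ is quadratic. I would handle this via an epigraph lift: introduce the convex body
\[
K' = \{(y,t) \in \R^{d+1} : y \in K,\ \norm{y - x}^2 \leq t\},
\]
whose membership oracle costs one query to $K$ plus exact arithmetic, and then minimize the linear function $(y,t) \mapsto t$ over $K'$. Verifying that $K'$, after a suitable affine shift in the $t$-coordinate and overall scaling, inherits well-roundedness with inner radius $\Omega(1)$ and outer radius $\poly(R,\norm{x})$ is a routine but somewhat delicate bookkeeping step that I expect to be the principal technical nuisance. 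An alternative route that sidesteps the lift is to binary-search on the radius $r$ and, at each step, call the weak point-in-body routine of \cite{grotschel2012geometric} on the convex body $K \cap (x + rB_d)$ to decide feasibility and return a witness; this gives the same asymptotic query bound with essentially the same well-roundedness considerations.
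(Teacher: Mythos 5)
Your proposal is correct and coincides with the paper's intended argument: the paper in fact gives no detailed proof of this proposition, only the preceding remark that projection can be phrased as a convex program and solved approximately with the ellipsoid method of \cite{grotschel2012geometric} via the membership-to-weak-separation conversion, which is precisely what you flesh out, and your strong-convexity inequality $f(y)-f(y')\geq\norm{y-y'}^2$ together with the epigraph lift (or the binary-search variant) is sound. One minor caveat: weak optimization in \cite{grotschel2012geometric} returns a point only in a $\delta$-neighborhood of the body rather than in $K$ itself, so to guarantee $\widetilde{y}\in K$ you should shrink the output toward the inner ball $B_d\subseteq K$ (e.g.\ replace $y$ by $y/(1+\delta)$), which perturbs the answer by only $O(\delta R)$ and is absorbed by taking $\delta$ polynomially smaller, leaving the stated query bound intact.
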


    Now, we are ready to state the $\varepsilon$-kernel construction algorithm. We start by considering the well-rounded case, in \cref{alg:well-rounded-eps-kernel-construction}. We subsequently prove its properties in \cref{thm:well-rounded-eps-kernel-construction}.

    \begin{algorithm}[!ht]
        \caption{Well-rounded deterministic $\varepsilon$-kernel construction~\cite{yu2008practical}}
        \label{alg:well-rounded-eps-kernel-construction}
        \textbf{Input:}
        \begin{enumerate}[nosep]
            \item $d \in \N_{\geq 2}$: the dimension.
            \item $\varepsilon \in (0,1)$: the desired precision.
            \item $R \geq 1$: the outer radius, with $R \in O(1)$.
            \item $O_K$: a membership oracle that on input $x \in \R^d$ returns whether $x \in K$, where $B_d \subseteq K \subseteq RB_d$.
        \end{enumerate}
        \textbf{Derived constant:} $\eta = \sqrt{\varepsilon/R}$.

        \textbf{Output:} An $\varepsilon$-kernel $\widetilde{K}$ of $K$.

        \textbf{Number of queries:} $\widetilde{O}(\varepsilon^{-(d-1)/2})\quad$ (in the limit where $d$ is fixed).

        \textbf{Procedure:} $\texttt{RoundedEpsKernConstr}(d, \varepsilon, R, O_K)$:
        \begin{enumerate}[nosep]
            \item Let $\{x_j\}_{j=1}^n$ be an $\eta$-net of $(R+1)S_{d-1}$.
            \item For $j = 1, \dots, n$,
            \begin{enumerate}[nosep]
                \item Project $x_j$ onto $K$ with precision $\varepsilon$, using \cref{prop:alg-proj}. Denote the outcome by $p_j$.
            \end{enumerate}
            \item Output $\conv(\{p_j\}_{j=1}^n)$.
        \end{enumerate}
    \end{algorithm}

    \begin{theorem}
        \label{thm:well-rounded-eps-kernel-construction}
        Let $d \in \N_{\geq 2}$, $R \geq 1$ with $R \in O(1)$, and $K \subseteq \R^d$ be a convex body such that $B_d \subseteq K \subseteq RB_d$. Then, \cref{alg:well-rounded-eps-kernel-construction} computes an $\varepsilon$-kernel of $K$ that satisfies $\widetilde{K} \subseteq K$, with $\widetilde{O}(\varepsilon^{-(d-1)/2})$ membership oracle queries.
    \end{theorem}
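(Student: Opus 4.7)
The plan is to verify three properties of \cref{alg:well-rounded-eps-kernel-construction}: (a) $\widetilde{K} \subseteq K$; (b) the query complexity is $\widetilde{O}(\varepsilon^{-(d-1)/2})$; and (c) $\widetilde{K}$ is an $\varepsilon$-kernel of $K$. Part (a) is immediate: each $p_j$ is returned by the approximate projection routine of \cref{prop:alg-proj}, which guarantees $p_j \in K$, so by convexity $\conv(\{p_j\}) \subseteq K$. For (b), \cref{lem:num-patches-lb} gives that an $\eta$-net of $(R+1)S_{d-1}$ has size $n \in \Theta(\eta^{-(d-1)}) = \Theta((R/\varepsilon)^{(d-1)/2})$, which reduces to $\Theta(\varepsilon^{-(d-1)/2})$ since $R \in O(1)$; each projection call costs $\widetilde{O}(1)$ queries by \cref{prop:alg-proj} under the same assumption, giving the claimed total.

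The main work is (c). Fix an arbitrary direction $u \in S_{d-1}$ and let $y^* \in K$ maximize $u^T x$ over $K$. Extend the ray from $y^*$ in direction $u$ until it hits $(R+1)S_{d-1}$ at a point $v = y^* + tu$; using $B_d \subseteq K \subseteq RB_d$ one checks that $t \in [1, O(R)]$. Pick a net point $x_j$ with $\|x_j - v\| \leq \eta$, let $y_j'$ be its exact nearest-point projection onto $K$, and let $p_j$ be the approximate projection returned by the algorithm (so $\|p_j - y_j'\| \leq \varepsilon$). Write $y_j' = y^* - \alpha u + \beta$ with $\alpha \geq 0$ (since $y^*$ maximizes) and $\beta \perp u$; here $\alpha$ is exactly the defect $u^T(y^* - y_j')$. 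The key geometric step is to plug this decomposition into the first-order optimality condition $(x_j - y_j')^T(y^* - y_j') \leq 0$. After using $\|x_j - v\| \leq \eta$ and $\beta \perp u$, this inequality reduces to $\alpha(t - \eta) + \alpha^2 \leq \eta^2/4$, from which $\alpha \leq \eta^2/(2t) = O(\varepsilon/R) = O(\varepsilon)$. Combined with the $\varepsilon$-precise projection error this yields $u^T p_j \geq u^T y^* - O(\varepsilon)$. Running the same argument for $-u$ gives a matching bound for the minimizer, so $\max_{x \in \widetilde{K}} u^T x - \min_{x \in \widetilde{K}} u^T x \geq \max_{x \in K} u^T x - \min_{x \in K} u^T x - O(\varepsilon)$. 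Dividing by the width, which is at least $2$ because $B_d \subseteq K$, converts this into a $(1 - O(\varepsilon))$ multiplicative bound, and rescaling $\varepsilon$ by a constant delivers an $\varepsilon$-kernel.

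The main obstacle is obtaining the sharp, quadratic-in-$\eta$ bound on $\alpha$. The crude estimate $\|y^* - y_j'\|^2 \leq \|y^* - x_j\|^2 - \|x_j - y_j'\|^2 \leq 4 t \eta$ only yields $\alpha = O(\sqrt{t\eta})$, which would force $\eta = O(\varepsilon^2)$ and a net of size $\Theta(\varepsilon^{-2(d-1)})$, destroying the claimed complexity. Extracting the cancellation that gives $\alpha = O(\eta^2)$ requires expanding the projection inequality against the explicit $u$-decomposition of $y_j'$ and exploiting that $y^*$ is extremal in direction $u$; this is precisely the geometric insight underlying the Bronshteyn--Ivanov/Dudley approach to polytopal approximation, and it is what motivates the choice $\eta = \sqrt{\varepsilon/R}$ in the algorithm.
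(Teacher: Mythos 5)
Your proposal is correct, but it takes a more self-contained route than the paper: the paper's proof of this theorem dispatches correctness in one line by citing \cite[Theorem~1]{yu2008practical} (which already covers approximate nearest-neighbor/projection queries), and only carries out the query count, which you do identically (net size $\Theta(\eta^{-(d-1)})=\Theta(\varepsilon^{-(d-1)/2})$ for fixed $d$ and $R\in O(1)$, times $\widetilde{O}(1)$ per projection via \cref{prop:alg-proj}, plus $\widetilde K\subseteq K$ from $p_j\in K$). What you add is the actual geometric argument behind that citation: for the maximizer $y^*$ in direction $u$, the variational inequality $(x_j-y_j')^T(y^*-y_j')\le 0$ for the exact projection, expanded against the decomposition $y_j'=y^*-\alpha u+\beta$ with $\beta\perp u$, gives after completing the square $\alpha(t-\eta)+\alpha^2\le \eta^2/4$ and hence $\alpha=O(\eta^2/t)=O(\varepsilon/R)$; this is exactly the Bronshteyn--Ivanov/Dudley cancellation that justifies $\eta=\sqrt{\varepsilon/R}$, and your remark that the naive bound $\alpha=O(\sqrt{t\eta})$ would ruin the complexity correctly identifies why the sharper step is essential. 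The trade-off is transparency versus length: the paper's proof is shorter but opaque about why the algorithm works, while yours makes the mechanism (and the role of the approximate projection error, which you fold in additively) explicit. Two minor points to tidy up if you write this out in full: your argument as stated produces a $C\varepsilon$-kernel for a constant $C$ (roughly $3/2$ after adding the projection error and dividing by the width $\ge 2$), so to literally match the theorem you should run the algorithm with precision $\varepsilon/C$, which leaves the $\widetilde{O}(\varepsilon^{-(d-1)/2})$ bound unchanged; and the $\eta$-net lives on $(R+1)S_{d-1}$, so its size is $\Theta(((R+1)/\eta)^{d-1})$ rather than $\Theta((R/\varepsilon)^{(d-1)/2})$ as written, which again is harmless since $R\in O(1)$.
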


    \begin{proof}
        Correctness is proven in \cite[Theorem~1]{yu2008practical}. For the bound on the number of queries, observe from \cref{lem:num-patches-lb} that the $\eta$-net contains $O(\eta^{-(d-1)}) = O(\varepsilon^{-(d-1)/2})$ points. Furthermore, for every point, we perform one approximate projection, which costs $\widetilde{O}(1)$ queries by \cref{prop:alg-proj}. As such, we conclude that the total number of membership oracle queries is $\widetilde{O}(\varepsilon^{-(d-1)/2})$.
    \end{proof}

    In the case where $K$ is not well-rounded, we combine \cref{alg:well-rounded-eps-kernel-construction} with the rounding procedure of \cref{thm:rounding}, and conversion to Nikodym distance in \cref{prop:eps-kernel-volumetric-approximation}, to obtain the following corollary.

    \begin{corollary}[Deterministic $\varepsilon$-kernel construction]
        \label{thm:eps-kernel-construction}
        Let $d \in \N_{\geq 2}$, $R \geq 1$, $\varepsilon \in (0,1/(4d(d+1)^2))$, and $K \subseteq \R^d$ be convex such that $B_d \subseteq K \subseteq RB_d$. Then, we can deterministically compute an $\varepsilon$-kernel $\widetilde{K}$ of $K$ with $\widetilde{O}(\varepsilon^{-(d-1)/2})$ membership oracle queries. $\widetilde{K}$ is also an $O(\varepsilon)$-precise approximation of $K$ in the relative Nikodym distance, and $\Vol(\widetilde{K})$ is an $O(\varepsilon)$-precise relative estimate of $\Vol(K)$.
    \end{corollary}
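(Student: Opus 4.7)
The plan is to reduce to the well-rounded case handled by Theorem~\ref{thm:well-rounded-eps-kernel-construction}, using the deterministic rounding procedure of Theorem~\ref{thm:rounding} and the affine invariance of $\varepsilon$-kernels from Lemma~\ref{prop:eps-kernel-affine-transform}. First, I would invoke Theorem~\ref{thm:rounding} to obtain, with $\widetilde{O}(1)$ membership queries to $K$, an invertible affine linear map $L : x \mapsto x_0 + Tx$ such that $B_d \subseteq L(K) \subseteq R'B_d$ with $R' = d(d+1)^2 \in O(1)$ (since $d$ is treated as a constant). Since we have an explicit description of $L$, we can simulate a membership oracle to $L(K)$ with exactly one query to $K$ per call, namely on input $y$ we query $K$ at $L^{-1}(y) = T^{-1}(y - x_0)$.

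Next, I would run \cref{alg:well-rounded-eps-kernel-construction} on the rounded body $L(K)$ with parameters $(d,\varepsilon,R')$, producing an $\varepsilon$-kernel $\widetilde{K}'$ of $L(K)$ using $\widetilde{O}(\varepsilon^{-(d-1)/2})$ queries to the simulated membership oracle (and hence the same number to $K$). Pulling back, I set
\[
\widetilde{K} = L^{-1}(\widetilde{K}').
\]
By the affine-invariance of $\varepsilon$-kernels (\cref{prop:eps-kernel-affine-transform}), $\widetilde{K}$ is an $\varepsilon$-kernel of $L^{-1}(L(K)) = K$, and it is contained in $K$ since $\widetilde{K}' \subseteq L(K)$.

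For the final claims, I would apply \cref{prop:eps-kernel-volumetric-approximation} (whose hypothesis $\varepsilon \in (0,1/(4d(d+1)^2))$ is exactly the one given) to conclude that $\Vol(\widetilde{K} \Delta K)/\Vol(K) \in O(\varepsilon)$, so $\widetilde{K}$ is also an $O(\varepsilon)$-Nikodym approximation of $K$; the construction of the outer approximation in that proposition adds only $\widetilde{O}(1)$ further queries. Since $\widetilde{K} \subseteq K \subseteq \overline{K}$ with $\Vol(\overline{K}\setminus\widetilde{K})/\Vol(K) \in O(\varepsilon)$, we get $|\Vol(\widetilde{K}) - \Vol(K)| \leq O(\varepsilon)\Vol(K)$, giving the claimed relative volume estimate. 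Summing the contributions, the total query count is $\widetilde{O}(1) + \widetilde{O}(\varepsilon^{-(d-1)/2}) + \widetilde{O}(1) = \widetilde{O}(\varepsilon^{-(d-1)/2})$.

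There is no real obstacle here: all substantive work has been done in the preceding results. The only small subtlety is keeping track of which oracle is being queried (one needs the simulation of a membership oracle to $L(K)$ by a membership oracle to $K$ to be exact, so that \cref{alg:well-rounded-eps-kernel-construction} applies verbatim), and checking that the hypothesis on $\varepsilon$ suffices to invoke \cref{prop:eps-kernel-volumetric-approximation} on the rounded body — which it does, since the bound was chosen precisely with $R' = d(d+1)^2$ in mind.
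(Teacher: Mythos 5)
Your proposal is correct and follows exactly the route the paper intends for this corollary: round via \cref{thm:rounding}, simulate the membership oracle of $L(K)$ with one query to $K$ per call, run \cref{alg:well-rounded-eps-kernel-construction} on the rounded body, pull back using the affine invariance in \cref{prop:eps-kernel-affine-transform}, and invoke \cref{prop:eps-kernel-volumetric-approximation} for the Nikodym and volume claims. The query-count bookkeeping and the check of the hypothesis $\varepsilon \in (0,1/(4d(d+1)^2))$ match the paper's (unwritten but clearly indicated) argument.
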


    This corollary proves all the $\widetilde{O}(\varepsilon^{-(d-1)/2})$ upper bounds in \cref{tbl:results}. In \cref{subsec:reductions}, we prove that this approach is essentially optimal for the reconstruction problem in all computational models, i.e., even in the randomized and quantum settings we cannot improve significantly over the approach taken in \cref{thm:eps-kernel-construction}.

    \subsection{Volume estimation}
    \label{subsec:alg-vol-est}

    In this subsection, we switch to the volume estimation problem. To start off, we remark that \cref{thm:eps-kernel-construction} solves it in the deterministic setting. In the randomized and quantum settings, however, we can obtain a significant improvement over this approach.

    The core idea is to run the (deterministic) $\varepsilon'$-kernel construction algorithm up to some worse precision $\varepsilon' > \varepsilon$, to obtain an inner approximation $\underline{K} \subseteq K$. We can the use \cref{prop:eps-kernel-volumetric-approximation} to obtain an outer approximation $\overline{K} \supseteq K$, such that $\Vol(\overline{K} \setminus \underline{K})/\Vol(K) \in O(\varepsilon')$.

    Subsequently, in the randomized setting, we sample uniformly from $\overline{K} \setminus \underline{K}$, and compute the fraction of points $\xi$ that is inside $K$. We then compute $\Vol(\underline{K}) + \xi\Vol(\overline{K} \setminus \underline{K})$ as a refined estimate of the volume of $K$, eventually resulting in the following theorem.

    \begin{theorem}[Randomized volume estimation]
        \label{thm:randomized-volume-estimation}
        Let $d \in \N_{\geq 2}$, $R \geq 1$, $\varepsilon \in (0,1)$, $K \subseteq \R^d$ convex such that and $B_d \subseteq K \subseteq RB_d$. We can compute $\widetilde{V} \geq 0$ such that $|\widetilde{V} - \Vol(K)| \leq \varepsilon\Vol(K)$ with probability at least $2/3$, using a randomized algorithm that makes $\widetilde{O}(\varepsilon^{-2(d-1)/(d+3)})$ queries to a membership oracle of $K$.
    \end{theorem}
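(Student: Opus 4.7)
The plan is to implement the two-phase strategy sketched in the introduction: first use the deterministic $\varepsilon'$-kernel construction at a coarser precision $\varepsilon' > \varepsilon$ to obtain sandwiching polytopes $\underline{K} \subseteq K \subseteq \overline{K}$ of thin ``sliver'' $\overline{K} \setminus \underline{K}$, then refine the volume estimate by Monte Carlo on this sliver, and finally choose $\varepsilon'$ to balance the two query costs.

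First, I would reduce to the well-rounded case by applying the rounding routine of \cref{thm:rounding} at cost $\widetilde{O}(1)$; since volume estimation is a relative-error problem and an invertible affine map rescales $\Vol$ by a known factor $|\det T|$, this is without loss of generality. Then, for a parameter $\varepsilon' \in (\varepsilon, 1/(4d(d+1)^2))$ to be fixed at the end, I would invoke \cref{thm:eps-kernel-construction} to construct an $\varepsilon'$-kernel $\underline{K} \subseteq K$ using $\widetilde{O}((\varepsilon')^{-(d-1)/2})$ membership queries, and then apply \cref{prop:eps-kernel-volumetric-approximation} to obtain an explicitly described convex polytope $\overline{K} \supseteq K$ with
\[
    \Vol(\overline{K} \setminus \underline{K}) \le C_d\, \varepsilon' \Vol(K),
\]
where $C_d$ depends only on $d$. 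No further queries are needed for either polytope; both $\Vol(\underline{K})$ and $\Vol(\overline{K} \setminus \underline{K})$ can be computed exactly from their classical descriptions.

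Second, I would draw $m$ independent uniform samples $Y_1,\ldots,Y_m$ from $\overline{K} \setminus \underline{K}$ (allowed by the classical-description sampling assumption in \cref{subsec:rounding}), query the membership oracle of $K$ once per sample, and set $Z_i = \mathbf{1}[Y_i \in K]$ and $\xi = \frac{1}{m}\sum_i Z_i$. Since $\underline{K} \subseteq K$, the true mean is $p = \Vol(K \setminus \underline{K})/\Vol(\overline{K} \setminus \underline{K})$, so the estimator
\[
    \widetilde{V} := \Vol(\underline{K}) + \xi \cdot \Vol(\overline{K} \setminus \underline{K})
\]
obeys $|\widetilde{V} - \Vol(K)| = |\xi - p| \cdot \Vol(\overline{K} \setminus \underline{K})$. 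A Hoeffding bound therefore yields $|\widetilde{V} - \Vol(K)| \le \varepsilon \Vol(K)$ with probability at least $2/3$ provided
\[
    m \;=\; O\!\left(\frac{\Vol(\overline{K} \setminus \underline{K})^2}{\varepsilon^2 \Vol(K)^2}\right) \;=\; O\!\left((\varepsilon'/\varepsilon)^2\right),
\]
using the volume bound above.

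Finally, I would balance the two contributions $\widetilde{O}((\varepsilon')^{-(d-1)/2}) + O((\varepsilon'/\varepsilon)^2)$ by setting them equal, which gives $(\varepsilon')^{(d+3)/2} = \Theta(\varepsilon^2)$, i.e.\ $\varepsilon' = \Theta(\varepsilon^{4/(d+3)})$, and a total query count of $\widetilde{O}(\varepsilon^{-2(d-1)/(d+3)})$ as claimed. The main conceptual point, and the only place where care is needed, is the derivation of the tight bound $\Vol(\overline{K} \setminus \underline{K}) \le C_d \varepsilon' \Vol(K)$ and its use to translate the additive Hoeffding error on $\xi$ into a relative error on $\Vol(K)$; the extra $d$-dependent constants propagate harmlessly into the prefactor hidden by the $\widetilde{O}$ notation (as per the convention stated in \cref{tbl:results}). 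Issues about exactly sampling from $\overline{K} \setminus \underline{K}$ are handled by the computational-model assumption that uniform sampling from regions with a classical description is free of membership queries.
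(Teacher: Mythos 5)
Your proposal is correct and follows essentially the same route as the paper's proof: build a coarse $\varepsilon'$-kernel $\underline{K}$ via \cref{thm:eps-kernel-construction}, blow it up to an outer body $\overline{K}$ via \cref{prop:eps-kernel-volumetric-approximation}, Monte Carlo the sliver $\overline{K}\setminus\underline{K}$ with $O((\varepsilon'/\varepsilon)^2)$ membership queries, and balance to get $\varepsilon' = \Theta(\varepsilon^{4/(d+3)})$. The only (immaterial) deviations are your use of Hoeffding where the paper uses Chebyshev and an explicit upfront rounding step that the cited corollary already handles internally.
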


    \begin{proof}
        We first use \cref{thm:eps-kernel-construction} to obtain an $\varepsilon'$-kernel $\underline{K} \subseteq K$ with
        \[\varepsilon' = \frac{(1+\varepsilon^{\frac{4}{d+3}})^{\frac1d}-1}{4d(d+1)^2} \in \Theta\left(\varepsilon^{\frac{4}{d+3}}\right).\]
        We make $\widetilde{O}((\varepsilon')^{-(d-1)/2}) = \widetilde{O}(\varepsilon^{-2(d-1)/(d+3)})$ membership oracles in this step of the algorithm. Then, we use \cref{prop:eps-kernel-volumetric-approximation} to generate an outer approximation $\overline{K} \supseteq K$, using just $\widetilde{O}(1)$ membership oracle queries. We find that $\Vol(\overline{K}/\underline{K})/\Vol(K) \leq \varepsilon'$.

        Next, we take $n := \lceil 3(\varepsilon'/\varepsilon)^2\rceil$ random samples from $\overline{K} \setminus \underline{K}$, and we check for each of them whether they are in $K$. We denote the fraction of them that is in $K$ by $\xi$. This step requires $n = \widetilde{O}((\varepsilon'/\varepsilon)^2) = \widetilde{O}(\varepsilon^{-2(d-1)/(d+3)})$ membership oracle queries too.

        Finally, we output $\widetilde{V} = \Vol(K) + \xi\Vol(\overline{K} \setminus \underline{K})$. We observe that
        \[\E[\widetilde{V}] = \Vol(\underline{K}) + \E[\xi]\Vol(\overline{K} \setminus \underline{K}) = \Vol(\underline{K}) + \frac{\Vol(K \setminus \underline{K})}{\Vol(\overline{K} \setminus \underline{K})}\Vol(\overline{K} \setminus \underline{K}) = \Vol(K),\]
        and
        \[\Var[\widetilde{V}] = \Vol(\overline{K} \setminus \underline{K})^2\Var[\xi] \leq \frac{(\varepsilon' \cdot \Vol(K))^2}{n} \leq \frac{(\varepsilon \cdot \Vol(K))^2}{3}.\]
        We conclude with Chebyshev's inequality that
        \[\P\left[\left|\widetilde{V} - \Vol(K)\right| > \varepsilon\Vol(K)\right] \leq \frac{\Var[\widetilde{V}]}{(\varepsilon \cdot \Vol(K))^2} \leq \frac13.\qedhere\]
    \end{proof}

    In the quantum case, we speed up the sampling phase from the randomized algorithm by replacing it with a quantum primitive known as amplitude estimation~\cite{BHMT98}. We attain a quadratic speed-up of this step, and the claimed complexity follows after rebalancing the costs of the different steps.

    \begin{theorem}[Quantum volume estimation]
        \label{thm:quantum-volume-estimation}
        Let $d \in \N_{\geq 2}$, $R \geq 1$, $\varepsilon \in (0,1)$, and $K \subseteq \R^d$ convex such that $B_d \subseteq K \subseteq RB_d$. We can compute $\widetilde{V} \geq 0$ such that $|\widetilde{V} - \Vol(K)| \leq \varepsilon\Vol(K)$ with probability at least $2/3$, using a quantum algorithm that makes $\widetilde{O}(\varepsilon^{-(d-1)/(d+1)})$ queries to $O_K$.
    \end{theorem}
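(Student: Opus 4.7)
The plan is to follow the same inner/outer approximation scheme used in the proof of \cref{thm:randomized-volume-estimation}, but replace the classical Monte Carlo estimation of the fraction $\xi$ with quantum amplitude estimation~\cite{BHMT98}. First, I would apply \cref{thm:eps-kernel-construction} to construct an $\varepsilon'$-kernel $\underline{K} \subseteq K$ for some intermediate precision $\varepsilon'$ to be fixed later, using $\widetilde{O}((\varepsilon')^{-(d-1)/2})$ membership queries to $O_K$. Next, \cref{prop:eps-kernel-volumetric-approximation} produces an explicit outer approximation $\overline{K} \supseteq K$ with $\Vol(\overline{K} \setminus \underline{K})/\Vol(K) \in O(\varepsilon')$ at the cost of $\widetilde{O}(1)$ additional queries. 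Both $\underline{K}$ and $\overline{K}$ are fully described polytopes, so membership tests for them are internal and do not consume queries to $O_K$.

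In the second phase, I would use the assumption from the computational model that a uniform superposition over any classically described region can be prepared, to create a uniform superposition over $\overline{K} \setminus \underline{K}$. A single coherent invocation of $O_K$, writing its bit into a flag register, yields a unitary $A$ such that measuring the flag accepts with probability exactly
\[\xi \;=\; \frac{\Vol(K \setminus \underline{K})}{\Vol(\overline{K} \setminus \underline{K})}.\]
Running amplitude estimation on $A$ with $M$ applications returns an estimate $\widetilde{\xi}$ satisfying $|\widetilde{\xi} - \xi| \leq C/M$ for an absolute constant $C$ with constant success probability, which can be boosted to $2/3$ by the standard $O(\log 1)$-repetitions-and-median trick. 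I then output $\widetilde{V} := \Vol(\underline{K}) + \widetilde{\xi}\,\Vol(\overline{K} \setminus \underline{K})$, both volumes of which are computable offline from the polytope descriptions of $\underline{K}$ and $\overline{K}$ without further oracle access.

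For the error analysis, since $\Vol(\overline{K} \setminus \underline{K}) \in O(\varepsilon')\Vol(K)$, the identity $\E[\xi]\cdot\Vol(\overline{K}\setminus\underline{K}) = \Vol(K\setminus\underline{K})$ implies that an additive error of $\delta$ on $\xi$ induces a multiplicative error of $O(\varepsilon'\delta)$ on the estimate of $\Vol(K)$. Demanding the target multiplicative error $\varepsilon$ therefore amounts to choosing $\delta = \Theta(\varepsilon/\varepsilon')$, which costs $M = \widetilde{O}(\varepsilon'/\varepsilon)$ queries. Balancing the two stages, $(\varepsilon')^{-(d-1)/2} = \varepsilon'/\varepsilon$, gives $\varepsilon' = \Theta(\varepsilon^{2/(d+1)})$ and a total query complexity of $\widetilde{O}(\varepsilon^{-(d-1)/(d+1)})$, as claimed. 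The only genuinely new ingredient beyond the randomized proof is the amplitude estimation step; the main subtlety is to verify that its rigorous error guarantee (additive, with constant success probability) slots cleanly into the inner/outer bookkeeping, which is immediate once the cost balance is tracked carefully.
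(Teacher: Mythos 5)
Your proposal is correct and follows essentially the same route as the paper's proof: build an $\varepsilon'$-kernel $\underline{K}$ via \cref{thm:eps-kernel-construction}, enlarge it to $\overline{K}$ via \cref{prop:eps-kernel-volumetric-approximation}, estimate $\xi = \Vol(K\setminus\underline{K})/\Vol(\overline{K}\setminus\underline{K})$ with amplitude estimation to additive precision $\Theta(\varepsilon/\varepsilon')$, and balance the two stages to get $\varepsilon' \in \Theta(\varepsilon^{2/(d+1)})$ and total cost $\widetilde{O}(\varepsilon^{-(d-1)/(d+1)})$. Your output formula $\Vol(\underline{K}) + \widetilde{\xi}\,\Vol(\overline{K}\setminus\underline{K})$ is in fact the intended one (the paper's displayed $\Vol(K)+\xi\Vol(\overline{K}\setminus\underline{K})$ is a typo), so no gap remains.
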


    \begin{proof}
        Similarly to the proof of \cref{thm:randomized-volume-estimation}, we start by constructing an $\varepsilon'$-kernel $\underline{K} \subseteq K$ using \cref{thm:eps-kernel-construction}, and then use \cref{prop:eps-kernel-volumetric-approximation} to obtain $\overline{K} \supseteq K$ such that $\Vol(\overline{K} \setminus \underline{K}) \leq \varepsilon'\Vol(K)$. In contrast to \cref{thm:randomized-volume-estimation}, though, we choose
        \[\varepsilon' = \frac{(1 + \varepsilon^{\frac{2}{d+1}})^{\frac1d} - 1}{4d(d+1)^2} \in \Theta\left(\varepsilon^{\frac{2}{d+1}}\right),\]
        which brings the total number of membership oracle queries in this part to $\widetilde{O}((\varepsilon')^{-(d-1)/2}) = \widetilde{O}(\varepsilon^{-(d-1)/(d+1)})$.

        Next, we use amplitude estimation to find an estimate $\xi$ of $\Vol(K \setminus \underline{K}) / \Vol(\overline{K} \setminus \underline{K})$ up to precision $\varepsilon/\varepsilon'$. To that end, we compute the overlap between the uniform superposition on $\overline{K} \setminus \underline{K}$, and the subspace spanned all the quantum states representing vectors in $K$. From \cite{BHMT98}, we find that this can be done with success probability at least $8/\pi^2 > 2/3$, and with a total number of queries that satisfies $\widetilde{O}(\varepsilon'/\varepsilon) = \widetilde{O}(\varepsilon^{-(d-1)/(d+1)})$. Finally, we output $\widetilde{V} = \Vol(K) + \xi\Vol(\overline{K} \setminus \underline{K})$, and observe that
        \[|\widetilde{V} - \Vol(K)| = \Vol(\overline{K} \setminus \underline{K}) \cdot \left|\xi - \frac{\Vol(K \setminus \underline{K})}{\Vol(\overline{K} \setminus \underline{K})}\right| \leq \varepsilon' \cdot \Vol(K) \cdot \frac{\varepsilon}{\varepsilon'} = \varepsilon \cdot \Vol(K).\qedhere\]
    \end{proof}

    This concludes our discussion of randomized and quantum algorithms estimating the volume of a convex body. In \cref{subsec:reductions}, we prove that these algorithms are essentially optimal.

    \section{Lower bounds}
    \label{sec:lower-bounds}

    In this section, we prove matching lower bounds for the computational problems from \cref{def:problems}. We can always rescale our problems, and so we can freely change the assumption that $B_d \subseteq K \subseteq RB_d$ to $(1/R)B_d \subseteq K \subseteq B_d$. For ease of notation, we will phrase everything with the latter assumption in mind.

    The lower bounds crucially rely on embedding a bitstring on the boundary of the unit ball in $d$ dimensions. To that end, we let $\{v_j\}_{j=1}^n$ be a net of $S_{d-1}$, and for every $j \in [n]$, we define a \textit{spherical cap} $P_j$ around $v_j$, i.e., a small region cut off from the unit ball by a hyperplane that is orthogonal to $v_j$. We take all the spherical caps to be disjoint and with equal volume, and we let $K_0$ be the remaining part of the unit ball, as shown in \cref{fig:lb-construction}.

    \begin{figure}[!ht]
        \centering
        \begin{tikzpicture}[scale=1.5]
            \def\num{6}
            \fill (0,0) circle[radius=.035];
            \node at (.5,0) {$K_0$};
            \foreach \j in {1,...,\num} {
                \draw[fill=gray!20] ({1/cos(180/\num)*cos(360*\j/\num)},{1/cos(180/\num)*sin(360*\j/\num)}) arc({360/\num*\j}:{360/\num*(\j-1)}:{1/cos(180/\num)}) to cycle;
                \draw[->] (0,0) to ({1/cos(180/\num)*cos(360*(\j-.5)/\num)},{1/cos(180/\num)*sin(360*(\j-.5)/\num)});
                \node at ({(1/cos(180/\num)+.12)*cos(360*(\j-.5)/\num)},{(1/cos(180/\num)+.12)*sin(360*(\j-.5)/\num)}) {$v_{\j}$};
                \node at ({(1/cos(180/\num)-.08)*cos(360*(\j-.4)/\num)},{(1/cos(180/\num)-.08)*sin(360*(\j-.4)/\num)}) {\tiny$P_{\j}$};
            }
        \end{tikzpicture}
        \caption{The lower bound construction in two dimensions with $n = 6$. For any $x \in \{0,1\}^6$, the convex body $K_x$ is formed by taking the union of $K_0$ and all spherical caps $P_j$ if and only if the corresponding bit $x_j$ is $1$.}
        \label{fig:lb-construction}
    \end{figure}
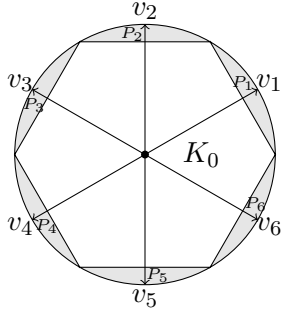

    The core idea is to associate a convex body $K_x$ to every bitstring $x \in \{0,1\}^n$, defined as the union of $K_0$, and all the spherical caps $P_j$ if and only if $x_j = 1$. We then immediately observe that $\Vol(K_x) = \Vol(K_0) + |x|\Vol(P_1)$, and $\Vol(K_x \Delta K_y) = |x \oplus y| \Vol(P_1)$.

    Intuitively, now, if we find an approximation to the convex body $K_x$, then we also find an approximation to the bitstring $x$. Similarly, if we find a sufficiently precise estimate of $K_x$'s volume, we also obtain a good approximation of $x$'s Hamming weight. In short, we can reduce the bitstring recovery problem to convex set estimation, and the Hamming weight estimation problem to volume estimation, and hence lower bounds on the former imply lower bounds on the latter.

    \subsection{Query complexity of bitstring problems}
    \label{subsec:bit-string-problems}

    We first recall the bitstring recovery and the Hamming weight estimation problems. These are well-studied in the existing literature, albeit a bit scattered, and we gather the existing results in two concise theorem statements. The proofs can be found in \cref{app:bit-string-problems}.

    \begin{restatable}[Bitstring recovery problem]{theorem}{bitstringrecovery}
        \label{thm:bit-string-recovery}
        Let $n \in \N$, and let $x \in \{0,1\}^n$ be a bitstring that we can access through bit queries. Suppose we wish to output a bitstring $y \in \{0,1\}^n$ such that $|x \oplus y| \leq n/4$. The query complexities for this problem are $\Theta(n)$ in the deterministic, randomized and quantum setting.
    \end{restatable}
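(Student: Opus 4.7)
The upper bound of $O(n)$ queries in all three models is immediate: query every index of $x$ and output $y = x$, giving Hamming distance $0 \leq n/4$. The work lies entirely in the matching $\Omega(n)$ lower bounds.

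For the deterministic bound I would use an adversary argument. Given any algorithm making $q$ queries, the adversary answers queries arbitrarily (say, all zeros), observes the output $y \in \{0,1\}^n$, and then commits to the input $x$ that agrees with the answers on the $q$ queried coordinates and disagrees with $y$ on the remaining $n - q$ coordinates. Then $|x \oplus y| = n - q$, so the success condition $n - q \leq n/4$ forces $q \geq 3n/4 = \Omega(n)$.

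For the randomized bound I would apply Yao's minimax principle with the uniform distribution on $\{0,1\}^n$. For any deterministic $q$-query algorithm, conditional on the transcript of query answers the $n - q$ unqueried coordinates of $x$ are i.i.d.\ uniform, so the Hamming distance between $x$ and the transcript-determined output $y$, restricted to those coordinates, is distributed as $\mathrm{Bin}(n - q, 1/2)$, with median $\lfloor(n-q)/2\rfloor$. For this to be at most $n/4$ with probability at least $2/3$ one needs $(n - q)/2 \leq n/4 + O(\sqrt{n})$ by a Chernoff/CLT estimate, forcing $q \geq n/2 - O(\sqrt{n}) = \Omega(n)$.

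For the quantum bound I would combine Nayak's $\Omega(n)$ lower bound for approximate quantum random access codes with the polynomial-method formulation of Beals--Buhrman--Cleve--Mosca--de~Wolf. On uniform $x$, a Fano-type inequality applied to the list $\{z : |z \oplus y| \leq n/4\}$ of size at most $2^{H(1/4)\,n}$ forces $I(X;Y) \geq (1 - H(1/4))\,n - O(1) = \Omega(n)$ for any output $Y$ succeeding with probability $\geq 2/3$. On the other hand, the output probabilities $\Pr[y \mid x]$ of a $q$-query quantum algorithm are multilinear polynomials in $x_1, \ldots, x_n$ of degree at most $2q$, so the effective state space compresses to dimension $\binom{n}{\leq 2q} \leq 2^{n H(2q/n) + O(\log n)}$; bounding $I(X;Y)$ by the logarithm of this dimension via a Holevo-style argument yields $n \cdot H(2q/n) = \Omega(n)$ and hence $q = \Omega(n)$. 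The main obstacle is avoiding a $\log n$ loss in the quantum argument: the na\"ive Alice--Bob simulation spends $O(\log n)$ qubits per query and would only yield $\widetilde{\Omega}(n)$ through Nayak's theorem. Applying the polynomial-method compression directly, rather than routing through communication complexity, is what recovers the tight $\Omega(n)$ bound.
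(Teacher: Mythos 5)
Your upper bound and your deterministic and randomized lower bounds are correct, and they are in fact more self-contained than the paper's treatment: the paper proves only the quantum $\Omega(n)$ bound (by citing the Farhi--Goldstone--Gutmann--Sipser dimension-counting technique and \cite[Lemma~4.6]{CHJ22}) and lets it subsume the deterministic and randomized cases. Your adversary argument and your Yao/binomial-concentration argument are fine as elementary alternatives for those two models. The problem is the quantum step, which is where all the content of the theorem lies.

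Two issues there, the second of which is fatal to the argument as written. First, the logic runs in the wrong order: Holevo bounds accessible information by the log-dimension of the space the \emph{quantum states} occupy, so the correct statement is that after $q$ queries the amplitudes are degree-$q$ polynomials in $x$, hence all final states $\ket{\psi_x}$ lie in one fixed subspace of dimension $D \leq \sum_{j \leq q}\binom{n}{j}$, and only then does a Holevo-type bound apply. Knowing merely that the classical output distribution $\Pr[y\mid x]$ has polynomial degree $2q$ (equivalently, that the channel matrix has rank at most $\sum_{j\le 2q}\binom{n}{j}$) does not by itself give $I(X;Y)\le\log(\mathrm{rank})$; there is no off-the-shelf ``Holevo-style'' inequality for classical channels of low rank. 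This is repairable. Second, the Fano step is quantitatively vacuous at success probability $2/3$: Fano gives $I(X;Y) \geq n - 1 - nH(1/4) - \Pr[\mathrm{fail}]\cdot n$, and since $1 - H(1/4) \approx 0.19 < 1/3$ this is negative; you cannot drop the failure term as in your ``$(1-H(1/4))n - O(1)$'', and boosting the success probability while preserving the radius $n/4$ is not immediate (coordinate-wise majority of repeated runs degrades the radius). The standard fix, and essentially what the cited argument does, is to bypass mutual information: let $P$ be the projector onto the $D$-dimensional span of the final states and $\Pi_x$ the projector onto outputs $y$ with $|x \oplus y| \leq n/4$; then $\tfrac23\, 2^n \leq \sum_x \operatorname{tr}\bigl(\Pi_x \ket{\psi_x}\bra{\psi_x}\bigr) \leq \operatorname{tr}\bigl(\bigl(\sum_x \Pi_x\bigr)P\bigr) = |B(n/4)|\operatorname{tr}(P) \leq 2^{H(1/4)n + o(n)} D$, which forces $D \geq 2^{\Omega(n)}$ and hence $q = \Omega(n)$ because $1 - H(1/4) > 0$. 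With that replacement your plan coincides with the paper's route.
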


    Next, we consider the Hamming weight estimation problem. This problem is sometimes also referred to as the \textit{approximate counting problem}. In short, given a bitstring of length $n$, we wish to estimate the number of $1$'s it contains up to some additive precision $k$. Its hardness is also well-understood -- we gather the previous known results in \cref{thm:hamming-weight-estimation}.

    \begin{restatable}[Hamming weight estimation problem]{theorem}{hammingweightestimation}
        \label{thm:hamming-weight-estimation}
        Let $n \in \N$ and let $x \in \{0,1\}^n$ be a bitstring that we can access through bit queries. Let $k \in \N$ be such that $1 \leq k \leq n/4$. Suppose we wish to output $w \in [n]_0$ such that $||x| - w| \leq k$. The query complexities for this problem are $\Theta(n)$ in the deterministic setting, $\Theta(\min\{(n/k)^2,n\})$ in the randomized setting, and $\Theta(n/k)$ in the quantum setting.
    \end{restatable}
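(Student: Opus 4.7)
The plan is to establish matching upper and lower bounds separately in each of the three computational models; the case $k^2 \geq n$ is not interesting since both the randomized and quantum bounds collapse to $\Theta(n)$, which is trivially matched by querying every bit, so throughout I focus on the regime $k^2 < n$.

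\emph{Upper bounds.} The deterministic $O(n)$ bound is attained by querying every position. For the randomized $O((n/k)^2)$ bound, I would sample $m = \Theta((n/k)^2)$ bit positions uniformly at random with replacement, let $\hat p$ be the empirical fraction of $1$'s, and output $w = \lfloor n\hat p\rfloor$; a Hoeffding/Chernoff bound then gives $|w - |x|| \leq k$ with probability at least $2/3$. The quantum $O(n/k)$ bound follows directly from amplitude estimation applied to the indicator function $j \mapsto x_j$ on $[n]$, as in~\cite{BHMT98}, with precision parameter $k/n$.

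\emph{Lower bounds.} The deterministic $\Omega(n)$ bound follows from a standard adversary argument: if fewer than $n/2$ positions have been queried then the adversary can set the remaining bits consistently with any Hamming weight within a range of size larger than $2k$, forcing failure. For the randomized $\Omega(\min\{(n/k)^2, n\})$ bound, I would apply Yao's minimax principle and reduce to the two-hypothesis problem of distinguishing a uniformly random string of weight $\lfloor n/2\rfloor - k$ from one of weight $\lfloor n/2\rfloor + k$; bounding the total variation distance between the induced query distributions (for instance via a $\chi^2$ bound between the relevant hypergeometric distributions, or via a direct Hoeffding-style concentration argument applied to the empirical mean of the queried bits) shows that $o((n/k)^2)$ queries cannot separate the two. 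The quantum $\Omega(n/k)$ bound is the standard lower bound for approximate counting, obtainable from the polynomial method together with known extremal-polynomial estimates à la Nayak--Wu.

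The main obstacle will be the randomized lower bound, since it cannot be obtained by a pure counting argument and requires a concrete two-hypothesis reduction plus a quantitative statistical-distance estimate; this is where most of the technical work lies. The remaining ingredients are either elementary (deterministic and randomized upper bounds, deterministic lower bound) or direct invocations of well-established primitives (quantum upper bound via~\cite{BHMT98}, quantum lower bound via the polynomial method), so the bulk of the proof consists of collating these standard results into the advertised $\Theta$-bounds while carefully handling the boundary case $k^2 \geq n$ in the randomized model.
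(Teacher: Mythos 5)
Your upper bounds and your deterministic lower bound are essentially the paper's: the same uniform-sampling estimator with $\Theta((n/k)^2)$ queries (the paper uses Chebyshev where you use Hoeffding), quantum approximate counting from \cite{BHMT98}, and the same adversary-style $\Omega(n)$ argument deterministically; the quantum lower bound is likewise the standard Nayak--Wu bound that the paper also invokes (\cite{NW99}, together with \cite{Amb00,BBC+01}). The genuine difference is the randomized lower bound. The paper cites an existing sampling lower bound at precision $k=\sqrt{n}$ \cite[Lemma~26]{BDB20}, gets all $k\le\sqrt{n}$ by monotonicity, and handles $k>\sqrt{n}$ by a padding reduction (duplicate each bit of a string of length $n'\in\Theta((n/k)^2)$ about $k^2/n$ times). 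You instead propose a direct Yao-style two-hypothesis argument, bounding the statistical distance between the query distributions induced by uniformly random strings of two fixed weights. That route is sound and more self-contained, but it is also where the real work sits: you need the symmetrization step (permutation invariance reduces adaptive algorithms to reading $m$ distinct positions, i.e.\ sampling without replacement) and then a TV/$\chi^2$ bound between the resulting hypergeometric distributions; the without-replacement correction is exactly what produces the $\min\{(n/k)^2,n\}$ rather than a bare $(n/k)^2$. The paper's citation-plus-padding argument avoids this computation at the cost of an external lemma.

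Two slips to fix. First, you have the regimes swapped: when $k^2\ge n$ the bounds are $(n/k)^2\le n$ (randomized) and $n/k\le\sqrt{n}$ (quantum), so they do \emph{not} collapse to $\Theta(n)$ and querying every bit is not tight there; it is the regime $k\le\sqrt{n}$ where the randomized complexity equals $\Theta(n)$. Correspondingly, your claim that ``$o((n/k)^2)$ queries cannot separate the two hypotheses'' is false as stated when $k<\sqrt{n}$ (since $n$ queries always suffice) and must be read as $o(\min\{(n/k)^2,n\})$, which is precisely what the hypergeometric analysis gives; and your sampling upper bound should be capped at $n$ queries when $(n/k)^2>n$. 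Second, a minor point in the reduction: weights $\lfloor n/2\rfloor-k$ and $\lfloor n/2\rfloor+k$ differ by only $2k$, so an estimator with additive error exactly $k$ can output $n/2$ for both; separate the two hypotheses by $2k+2$ (or $\Theta(k)$ with a larger constant) so that a $k$-accurate estimate determines the hypothesis. Neither issue affects the viability of your approach.
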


    \subsection{Reduction to convex set estimation and volume estimation}
    \label{subsec:reductions}

    We start by formalizing the concept of a \textit{spherical cap}, and provide a visualization in \cref{fig:delta-patch}.

    \begin{definition}[Spherical cap]
        Let $d \in \N$, $v \in S_{d-1}$ and $r > 0$. Let $S_v = \{u \in S_{d-1} : \norm{u-v} < r\}$, and let $P_v$ be its convex hull. Then $P_v$ is the spherical cap around $v$ with radius $r$.
    \end{definition}

    \begin{figure}[!ht]
        \centering
        \begin{tikzpicture}[scale=2]
            \begin{scope}[rotate=90]
                \draw ({cos(60)},{-sin(60)}) arc(-60:-45:1);
                \draw ({cos(60)},{sin(60)}) arc(60:45:1);
                \draw[thick] ({cos(45)},{-sin(45)}) arc(-45:45:1);
                \node at ({1.15*cos(25)},{1.15*sin(25)}) {$S_v$};
                \draw[fill=gray!30] ({cos(45},{sin(45)}) to ({cos(45},{-sin(45)}) arc(-45:45:1);
                \fill (1,0) circle[radius=.035];
                \node at (.83,-.25) {$P_v$};
                \draw (.5,0) to (1,0);
                \draw (1,0) to node[below right=-.2em] {$r$} ({cos(45},{sin(45)});
                \node[above right] at (1,0) {$v$};
            \end{scope}
        \end{tikzpicture}
        \caption{The shaded region is a spherical cap around $v$ with radius $r$.}
        \label{fig:delta-patch}
    \end{figure}
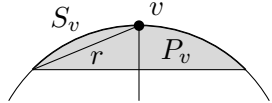

    Next, we prove some properties of spherical caps.

    \begin{lemma}[Properties of spherical caps]
        \label{lem:spherical-patches-properties}
        Let $d \in \N$, $v,w \in S_{d-1}$, $r > 0$, and $P_v$ and $P_w$ the spherical caps with radius $r$ around $v$ and $w$, respectively. Then,
        \begin{enumerate}[nosep]
            \item $P_v = \{x \in B_d : x^Tv > 1-r^2/2\}$.
            \item If $\norm{v-w} \geq 2r$, then $P_v$ and $P_w$ are disjoint.
            \item $\Vol(P_v) \in \Theta(r^{d+1})$, in the limit where $d$ is fixed and $r \downarrow 0$.
        \end{enumerate}
    \end{lemma}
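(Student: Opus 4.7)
The plan is to exploit the elementary identity $\norm{u-v}^2 = 2 - 2\, u^T v$ for $u,v \in S_{d-1}$ as the common starting point for all three parts.

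For part (1), I would apply the identity to rewrite $S_v = \{u \in S_{d-1} : u^T v > 1 - r^2/2\}$. The inclusion $P_v \subseteq \{x \in B_d : x^T v \geq 1 - r^2/2\}$ is then immediate, since the right-hand side is closed and convex and contains $S_v$. For the reverse inclusion, I would invoke Minkowski's extreme-point theorem applied to the compact convex set $C := B_d \cap \{x : x^T v \geq 1 - r^2/2\}$: strict convexity of $\norm{\cdot}^2$ shows that if $x \in C$ and $\norm{x} < 1$ then $x$ is the midpoint of two distinct nearby points in $C$, so every extreme point of $C$ lies on $S_{d-1}$; conversely, any $x \in S_{d-1} \cap C = \overline{S_v}$ is extreme by the same strict convexity (a midpoint of two ball points of unit norm forces them to coincide). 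Hence $C = \conv(\overline{S_v}) = \overline{\conv(S_v)} = P_v$ under the paper's convention that convex bodies are closed.

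For part (2), I would add the two half-space descriptions from part (1): any $x \in P_v \cap P_w$ satisfies $x^T(v+w) \geq 2 - r^2$. The parallelogram identity $\norm{v-w}^2 + \norm{v+w}^2 = 4$ together with $\norm{v-w} \geq 2r$ yields $\norm{v+w} \leq 2\sqrt{1-r^2}$, so Cauchy--Schwarz and $\norm{x} \leq 1$ give $x^T(v+w) \leq 2\sqrt{1-r^2}$. It then suffices to observe that $2\sqrt{1-r^2} < 2 - r^2$ whenever $r > 0$ (squaring reduces to $r^4 > 0$), which contradicts the lower bound and establishes $P_v \cap P_w = \emptyset$.

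For part (3), I would choose coordinates with $v = e_d$, reducing $P_v$ to $\{x \in B_d : x_d \geq 1 - r^2/2\}$, and slice perpendicular to $e_d$. The slice at height $t \in [1-r^2/2, 1]$ is a $(d-1)$-ball of radius $\sqrt{1-t^2}$, giving
\[
\Vol(P_v) \;=\; \Gamma_{d-1}\int_{1-r^2/2}^1 (1-t^2)^{(d-1)/2}\, dt.
\]
Substituting $s = 1-t$ rewrites the integrand as $(s(2-s))^{(d-1)/2}$, and since $2-s \in \Theta(1)$ uniformly on $s \in [0, r^2/2]$ as $r \downarrow 0$, the integral is $\Theta\!\left(\int_0^{r^2/2} s^{(d-1)/2}\, ds\right) = \Theta(r^{d+1})$. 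None of the three parts presents a serious obstacle; the only subtle point is the open/closed mismatch in part (1) between $S_v$ (defined by strict inequality), $P_v = \conv(S_v)$, and the closed half-space intersection on the right-hand side, which is reconciled cleanly by the paper's closedness convention.
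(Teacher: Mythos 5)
Your proposal is correct, and for parts (2) and (3) it is essentially the paper's proof: the same parallelogram-identity-plus-Cauchy--Schwarz computation for disjointness (the paper phrases it as the contrapositive, you as a direct contradiction, which is immaterial), and the same slice integral $\Gamma_{d-1}\int_{1-r^2/2}^{1}(1-t^2)^{(d-1)/2}\,\mathrm{d}t$ with the substitution $s=1-t$ for the volume. The only real divergence is part (1). The paper just rewrites $\norm{u-v}^2 = 2-2u^Tv$ on the sphere and asserts the convex-hull identity in one line; you supply the missing justification, but via Minkowski/Krein--Milman applied to the closed set $C = B_d \cap \{x : x^Tv \geq 1-r^2/2\}$, so what you actually establish is $\overline{\conv(S_v)} = C$, and you then lean on the ``convex bodies are closed'' convention to match the statement. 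Note that the identity as literally stated (strict inequality, $P_v = \conv(S_v)$) holds exactly and needs no convention: given $x \in B_d$ with $x^Tv > 1-r^2/2$, write $x = (x^Tv)v + y$ with $y \perp v$ and observe that $x$ lies on the chord joining the two sphere points $(x^Tv)v \pm \sqrt{1-(x^Tv)^2}\,\hat{y}$, both of which belong to $S_v$ since their inner product with $v$ equals $x^Tv$; this gives the reverse inclusion directly and avoids extreme-point machinery (both this argument and yours use $d\geq 2$, which is the only case the paper invokes). The open/closed discrepancy you flag is a measure-zero difference, and since your part (2) argument works with non-strict inequalities anyway, nothing downstream of the lemma is affected.
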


    \begin{proof}
        For the first claim, observe that $x \in S_v$ if and only if $x^Tv = (\norm{x}^2 + \norm{v}^2 - \norm{x-v}^2)/2 > 1-r^2/2$. Consequently, $x \in P_v$ if and only if it is inside the unit ball and satisfies this inequality.

        For the second claim, we prove the contrapositive, i.e., we prove that $z \in P_v \cap P_w \Rightarrow \norm{v-w} < 2r$. To that end, let $z \in P_v \cap P_w$. Then, using that $\norm{z} \leq 1$ and Cauchy-Schwarz's inequality, we obtain
        \[\norm{v-w}^2 = 2\norm{v}^2 + 2\norm{w}^2 - \norm{v+w}^2 \leq 4 - \norm{z}^2\norm{v+w}^2 \leq 4 - (z^T(v+w))^2 < 4 - (2 - r^2)^2 \leq 4r^2.\]

        Finally, computing the volume yields
        \[\Vol(P_v) = \Gamma_{d-1}\int_{1-r^2/2}^1 (1-x^2)^{\frac{d-1}{2}} \;\mathrm{d}x = \Gamma_{d-1} \int_0^{r^2/2} (x(2-x))^{\frac{d-1}{2}} \;\mathrm{d}x \in \Theta(r^{d+1}).\qedhere\]
    \end{proof}

    We now use the above properties to compute how many spherical caps we can pack on the surface of a unit ball.

    \begin{lemma}[Packing of spherical caps]
        \label{lem:spherical-cap-packing}
        Let $d \in \N_{\geq2}$ and let $\delta > 0$. Let $\{v_j\}_{j=1}^n$ be a $\delta$-net, for each $j \in [n]$ let $P_j := P_{v_j}$ be a spherical cap with radius $\delta/2$, and let $P = \{P_j\}_{j=1}^n$. Then, $n = \Theta(\delta^{-(d-1)})$, all the spherical caps are disjoint, and the total volume of the spherical caps is $\Vol(P) := n\Vol(P_1) = \Theta(\delta^2)$.
    \end{lemma}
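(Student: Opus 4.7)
The plan is to assemble the three claims directly from the preceding results, with essentially no new geometric work: everything reduces to combining \cref{lem:num-patches-lb} with \cref{lem:spherical-patches-properties}.

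First I would establish the count $n = \Theta(\delta^{-(d-1)})$. This is an immediate invocation of \cref{lem:num-patches-lb} applied to the $\delta$-net $\{v_j\}_{j=1}^n$ of $S_{d-1}$.

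Next, I would check disjointness of the caps $\{P_j\}_{j=1}^n$. Since each $P_j$ has radius $\delta/2$, item~2 of \cref{lem:spherical-patches-properties} says $P_j$ and $P_k$ are disjoint whenever $\|v_j - v_k\| \geq 2 \cdot (\delta/2) = \delta$. The defining property of a $\delta$-net guarantees that any two distinct centers satisfy $\|v_j - v_k\| \geq \delta$, so all caps are pairwise disjoint.

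Finally, I would compute the total volume. By item~3 of \cref{lem:spherical-patches-properties} (with $r = \delta/2$), each cap has volume $\Vol(P_1) \in \Theta(\delta^{d+1})$ in the limit $\delta \downarrow 0$. Multiplying by the count gives
\[
\Vol(P) = n \cdot \Vol(P_1) \in \Theta\bigl(\delta^{-(d-1)}\bigr) \cdot \Theta\bigl(\delta^{d+1}\bigr) = \Theta(\delta^{2}),
\]
which is the claimed bound. There is no real obstacle here; the only subtlety is to note that all $\Theta$-constants hide prefactors that may depend on $d$, consistent with the convention used throughout the paper.
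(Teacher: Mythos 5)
Your proposal is correct and follows exactly the paper's route: the paper's proof is a one-line appeal to \cref{lem:num-patches-lb} and \cref{lem:spherical-patches-properties}, and you have simply spelled out those same three invocations (the net-size bound, disjointness via the $\norm{v_j-v_k}\geq\delta$ separation with caps of radius $\delta/2$, and the product $n\cdot\Vol(P_1)\in\Theta(\delta^{-(d-1)})\cdot\Theta(\delta^{d+1})=\Theta(\delta^2)$). Nothing is missing.
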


    \begin{proof}
        The claims follow immediately from \cref{lem:num-patches-lb} and \cref{lem:spherical-patches-properties}.
    \end{proof}

    Finally, we show how embedding a bitstring in these spherical caps naturally leads to a lower bound on the query complexity of the volume estimation and the convex set estimation problems.

    \begin{theorem}
        \label{thm:lbs}
        Let $d \in \N_{\geq 2}$ and $\varepsilon > 0$. Then, any algorithm $\A$ that solves the convex set estimation problem or the volume estimation problem, must make a number of queries that satisfies the complexities listed in \cref{tbl:results}.
    \end{theorem}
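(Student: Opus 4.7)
The plan is to reduce the bitstring recovery problem (\cref{thm:bit-string-recovery}) and the Hamming weight estimation problem (\cref{thm:hamming-weight-estimation}) to convex set estimation and volume estimation, respectively, using the construction sketched in \cref{fig:lb-construction}. For a parameter $\delta \in (0,1)$ to be chosen later, fix a $\delta$-net $\{v_j\}_{j=1}^n$ of $S_{d-1}$ with $n = \Theta(\delta^{-(d-1)})$, and for each $j \in [n]$ let $P_j$ be the spherical cap of radius $\delta/2$ around $v_j$. By \cref{lem:spherical-cap-packing} the $P_j$'s are pairwise disjoint, each has volume $\Theta(\delta^{d+1})$, and their union has volume $\Theta(\delta^2)$. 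Set $K_0 := B_d \setminus \bigcup_{j=1}^n \mathrm{int}(P_j)$ and, for each $x \in \{0,1\}^n$, $K_x := K_0 \cup \bigcup_{j:x_j=1} P_j$. Each $K_x$ is the intersection of $B_d$ with finitely many halfspaces (one per index $j$ with $x_j=0$), hence convex, and it satisfies $(1-\delta^2/8)B_d \subseteq K_x \subseteq B_d$, so after rescaling we obtain $(1/R)B_d \subseteq K_x \subseteq B_d$ with $R \in O(1)$ for small $\delta$.

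The key observation is that any membership query to $K_x$ at a point $y \in \R^d$ can be answered using at most one bit query to $x$: without consulting $x$, the algorithm determines which of the three mutually exclusive cases holds---$y \notin B_d$, $y \in K_0$, or $y \in P_j$ for the unique such $j$---and in the last case the answer equals $x_j$, while in the other two the answer is already determined. Consequently, any algorithm solving one of our geometric problems on $K_x$ induces, with at most the same query complexity, an algorithm recovering partial information about $x$.

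For convex set estimation, it suffices to handle the relative Nikodym formulation: by \cref{prop:eps-kernel-volumetric-approximation}, an algorithm producing an $\varepsilon$-kernel $\widetilde{K}$ yields an $O(\varepsilon)$-Nikodym approximation, so a lower bound for the latter transfers to the former up to a constant factor in $\varepsilon$. Given $\widetilde{K}$ with $\Vol(K_x \Delta \widetilde{K}) \leq \varepsilon \Vol(K_x)$, decode $y \in \{0,1\}^n$ by setting $y_j = 1$ iff $\Vol(\widetilde{K} \cap P_j) \geq \Vol(P_j)/2$. Since $K_x \cap P_j$ equals $P_j$ when $x_j=1$ and has measure zero otherwise, every flipped index contributes at least $\Vol(P_1)/2$ to $\Vol(K_x \Delta \widetilde{K})$, giving $|x \oplus y| \leq 2\varepsilon\Vol(K_x)/\Vol(P_1) = \Theta(\varepsilon/\delta^{d+1})$. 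Choosing $\delta = \Theta(\sqrt{\varepsilon})$ bounds this by $n/4$, and \cref{thm:bit-string-recovery} yields the desired $\widetilde{\Omega}(n) = \widetilde{\Omega}(\varepsilon^{-(d-1)/2})$ lower bound in all three computational models.

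For volume estimation, since $\Vol(K_x) = \Vol(K_0) + |x|\Vol(P_1)$ with both $\Vol(K_0)$ and $\Vol(P_1)$ known, any $\widetilde{V}$ with $|\widetilde{V} - \Vol(K_x)| \leq \varepsilon \Vol(K_x)$ yields an integer estimate of $|x|$ with additive error $k = \Theta(\varepsilon/\delta^{d+1})$, i.e.\ a valid output for Hamming weight estimation with parameters $(n,k) = (\Theta(\delta^{-(d-1)}),\Theta(\varepsilon/\delta^{d+1}))$. The constraint $1 \leq k \leq n/4$ restricts $\delta \in [\Theta(\sqrt{\varepsilon}), \Theta(\varepsilon^{1/(d+1)})]$. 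Applying \cref{thm:hamming-weight-estimation} and optimizing $\delta$ in each model yields: $\delta = \Theta(\sqrt{\varepsilon})$ maximizes $\Omega(n)$ at $\varepsilon^{-(d-1)/2}$ deterministically; $\delta = \Theta(\varepsilon^{2/(d+3)})$ balances $\Omega(\min\{(n/k)^2,n\})$ at $\varepsilon^{-2(d-1)/(d+3)}$ randomized; and $\delta = \Theta(\varepsilon^{1/(d+1)})$ maximizes $\Omega(n/k)$ at $\varepsilon^{-(d-1)/(d+1)}$ quantum, matching \cref{tbl:results} up to polylogarithmic factors. The main work will be routine parameter balancing; the one subtlety is verifying that the decoded $y$ (respectively $w$) is a legitimate output of the reduced bitstring problem for each choice of $\delta$, which hinges on the clean averaging argument in the convex set case and on knowing the auxiliary volumes $\Vol(K_0),\Vol(P_1)$ exactly in the volume case.
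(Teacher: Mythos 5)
Your proposal is correct and follows essentially the same route as the paper: the same spherical-cap/bitstring hard instances over a $\delta$-net, the same one-bit simulation of membership queries, and the same reductions to bitstring recovery and Hamming weight estimation with the same parameter balancing (you parametrize directly by $\delta$ where the paper uses an auxiliary precision $\varepsilon'$ with $\delta\in\Theta(\sqrt{\varepsilon'})$, which is equivalent). The only cosmetic difference is your per-cap majority decoding $y_j=1$ iff $\Vol(\widetilde{K}\cap P_j)\geq\Vol(P_j)/2$, where the paper decodes to the nearest $K_z$ in Nikodym distance; both give $|x\oplus y|\leq n/4$.
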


    \begin{proof}
        We start with the lower bounds on the convex set estimation problems. To that end, suppose that an algorithm $\A$ solves it with relative Nikodym distance $\varepsilon$.

        For every $\delta > 0$, we choose a particular $\delta$-net $\{v_j\}_{j=1}^n$, and then we let $\{P_j\}_{j=1}^n$ and $P$ be as in \cref{lem:spherical-cap-packing}. Observe that $\Vol(P) \in \Theta(\delta^2)$. We now choose the smallest $\delta > 0$ such that $\Vol(P) \geq 8\varepsilon\Gamma_d$, which for small enough $\varepsilon > 0$ is guaranteed to be well-defined. Then $\delta \in \Theta(\sqrt{\varepsilon})$, which implies that $n \in \Theta(\varepsilon^{-(d-1)/2})$.

        Now, for any bitstring $x \in \{0,1\}^n$, we define a convex body $K_x$ by
        \begin{equation}
            \label{eq:hard-instances}
            K_x = K_0 \cup \bigcup_{\substack{j = 1 \\ x_j = 1}}^n P_j,
        \end{equation}
        Then, given any bitstring $x \in \{0,1\}^n$, $\A$ outputs $\widetilde{K}$, an approximation of $K_x$ with relative precision $\varepsilon$. Next, we can define a bitstring $z \in \{0,1\}^n$ such that
        \[\Vol(\widetilde{K} \Delta K_z) = \min\{\Vol(\widetilde{K} \Delta K_y) : y \in \{0,1\}^n\} \leq \Vol(\widetilde{K} \Delta K_x) \leq \varepsilon\Vol(K_x) \leq \varepsilon\Gamma_d.\]
        Then, we obtain that
        \[|x \oplus z| = \frac{\Vol(K_z \Delta K_x)}{\Vol(P_1)} \leq \frac{n(\Vol(\widetilde{K} \Delta K_z) + \Vol(\widetilde{K} \Delta K_x))}{\Vol(P)} \leq \frac{n(\varepsilon\Gamma_d + \varepsilon\Gamma_d)}{8\varepsilon\Gamma_d} = \frac{n}{4}.\]
        Finally, we observe that any call to the membership oracle of $K_x$ can be simulated with at most one query to the bitstring $x$. Thus, according to \cref{thm:bit-string-recovery}, $\A$ must make at least $\Omega(n) = \Omega(\varepsilon^{-(d-1)/2})$ queries to the membership oracle of $K_x$, in the deterministic, randomized and quantum settings. Finally, since the $\varepsilon$-kernel construction problem is harder than the $\varepsilon$-Nikodym construction problem, the lower bounds from the first two columns in \cref{tbl:results} follow.

        It remains to prove the lower bounds for volume estimation. To that end, suppose that $\A$ solves the volume estimation problem up to precision $\varepsilon$. Again for every $\delta > 0$ we choose a particular $\delta$-net $\{v_j\}_{j=1}^n$, and we let $\{P_j\}_{j=1}^n$ and $P$ as in \cref{lem:spherical-cap-packing}, which means that $\Vol(P) \in \Theta(\delta^2)$. Next, we let $\varepsilon' > 0$ arbitrarily and choose the smallest $\delta > 0$ such that $\Vol(P) \geq 8\varepsilon'\Gamma_d$. For small enough $\varepsilon'$, this is well-defined, and we find that $\delta \in \Theta(\sqrt{\varepsilon'})$, from which we find that $n \in \Theta((\varepsilon')^{-(d-1)/2})$.

        We let $x \in \{0,1\}^n$ and $K_x$ as in \cref{eq:hard-instances}. Then,
        \[\Vol(K_x) = \Vol(K_0) + |x|\Vol(P_1).\]
        $\A$ outputs an estimate $\widetilde{V}$ of $\Vol(K_x)$, such that $|\widetilde{V} - \Vol(K_x)| \leq \varepsilon\Vol(K_x) \leq \varepsilon\Gamma_d$. Hence, we can compute $w = (\widetilde{V} - \Vol(K_0))/\Vol(P_1)$, and we observe that
        \[|w - |x|| = \left|\frac{\widetilde{V} - \Vol(K_0)}{\Vol(P_1)} - \frac{\Vol(K_x) - \Vol(K_0)}{\Vol(P_1)}\right| = \frac{n|\widetilde{V} - \Vol(K_x)|}{\Vol(P)} \leq \frac{n\varepsilon\Gamma_d}{8\varepsilon'\Gamma_d} = \frac{n\varepsilon}{8\varepsilon'} =: k.\]
        Finally, similarly as before, since we can simulate any query to a membership oracle to $K_x$ by one query to the bitstring $x$, we can use the lower bounds from \cref{thm:hamming-weight-estimation} to lower bound the query complexity of $\A$.

        For the lower bounds of \cref{thm:hamming-weight-estimation} to apply, we have to choose $\varepsilon' > 0$ as a function of $\varepsilon$ such that $1 \leq k \leq n/4$. To that end, we let $\ell = \varepsilon/2$ and $u = n\varepsilon/8$, and observe that we must choose $\varepsilon' \in [\ell,u]$. We choose $\varepsilon'$ as $\ell$, $\sqrt{\ell u}$ and $u$, in the deterministic, randomized, and quantum settings, respectively. We remark in the randomized setting that $n/k = 8\varepsilon'/\varepsilon \in \Theta(\sqrt{n})$, and so the lower bound from \cref{thm:hamming-weight-estimation} becomes~$\Omega(n)$. Similarly, in the quantum case, we have $k \in \Theta(1)$, and so the lower bound also becomes~$\Omega(n)$. It remains to plug in the asymptotic scaling of $n$, from which we obtain the following lower bounds for the volume estimation problem:
        \begin{align*}
            \textrm{Deterministic:}\quad & \Omega(n) = \Omega((\varepsilon')^{-\frac{d-1}{2}}) = \Omega(\varepsilon^{-\frac{d-1}{2}}) \\
            \textrm{Randomized:}\quad & \Omega(n) = \Omega(n^{\frac{4}{d+3}} \cdot n^{\frac{d-1}{d+3}}) = \Omega((\varepsilon')^{-\frac{2(d-1)}{d+3}} \cdot (\varepsilon'/\varepsilon)^{\frac{2(d-1)}{d+3}}) = \Omega(\varepsilon^{-\frac{2(d-1)}{d+3}}). \\
            \textrm{Quantum:}\quad & \Omega(n) = \Omega(n^{\frac{2}{d+1}} \cdot n^{\frac{d-1}{d+1}}) = \Omega((\varepsilon')^{-\frac{d-1}{d+1}} \cdot (\varepsilon'/\varepsilon)^{\frac{d-1}{d+1}}) = \Omega(\varepsilon^{-\frac{d-1}{d+1}}).\qedhere
        \end{align*}
    \end{proof}

    \section*{Acknowledgements}

    We would like to thank anonymous reviewers for directing our attention to the existing literature on $\varepsilon$-kernels, and many helpful comments on the presentation of the result.
    SA was supported in part by the European QuantERA project QOPT (ERA-NET Cofund 2022-25), the French PEPR integrated projects EPiQ (ANR-22-PETQ-0007) and HQI (ANR-22-PNCQ-0002), and the French ANR project QUOPS (ANR-22-CE47-0003-01).
    AC was supported by a Simons-CIQC postdoctoral fellowship through NSF QLCI Grant No.\ 2016245.

    \bibliographystyle{alpha}
    \bibliography{arxiv-references}

    \appendix

    \section{Query complexities of bitstring problems}
    \label{app:bit-string-problems}

    In this appendix, we provide the proofs to the query complexities of the bitstring problems that we use in the lower bound constructions in \cref{sec:lower-bounds}. We restate the theorems for convenience.

    \bitstringrecovery*

    \begin{proof}
        The (deterministic) algorithm is trivial -- simply query all the entries of $x$ and output the bitstring exactly. Since the randomized and quantum settings can simulate the deterministic setting, this provides the upper bounds on the query complexities.

        It remains to prove that this algorithm is optimal in all three models up to constant multiplicative factors. Since we can simulate any deterministic and randomized algorithm in the quantum setting, proving hardness in the latter suffices. An information-theoretic argument tells us that quantumly we need at least $\Omega(n)$ queries to output a bitstring that differs from $x$ in at most $n/4$ positions. The core technique stems from \cite{farhi1999bound}, and all remaining details can for instance be found in \cite[Lemma~4.6]{CHJ22}.
    \end{proof}

    \hammingweightestimation*

    \begin{proof}
        We first prove the upper bounds. The deterministic upper bound is trivial. In the randomized setting, we need to show that the query complexity is $O((n/k)^2)$. To that end, suppose that we sample $3(n/k)^2$ bits at random, with replacement, and output $n$ times the fraction of $1$'s observed. Let $X_j$ be the random variable describing the output of the $j$th sample, and we write the output of the algorithm as $w := n/3 \cdot (k/n)^2 \cdot \sum_{j=1}^{3(n/k)^2} X_j$. Then, we trivially find that $\E[w] = |x|$, and its variance can be bounded as
        \[\Var[w] = \Var\left[\frac{n}{3}\left(\frac{k}{n}\right)^2 \cdot \sum_{j=1}^{3(n/k)^2} X_j\right] = \frac{n^2}{9}\left(\frac{k}{n}\right)^4 \cdot 3\left(\frac{n}{k}\right)^2 \Var[X_1] = \frac13k^2\Var[X_1] \leq \frac13k^2,\]
        and so by Chebyshev's inequality, we have
        \[\P[|w - |x|| \geq k] \leq \frac{\Var[w]}{k^2} \leq \frac13.\]
        Finally, quantumly, the algorithm is known as the quantum approximate counting algorithm \cite{BHMT98}, and the upper bound follows directly.

        Then, it remains to prove the lower bounds. Deterministically, the problem is easiest when $k$ is largest, so it suffices to prove the lower bound when $k = n/4$. Once we have queried $n/2-1$ bits of the bitstring -- let's say we observed $w$ ones -- there are still $n/2+1$ bits left to query. Then, we know that the true Hamming weight must be between $w$ and $w + n/2 + 1$. Since this interval is larger than $2k$, we cannot guarantee that whatever number we output is at most $k$ away from the true Hamming weight. Thus, we need to make at least $n/2 \in \Omega(n)$ queries.

        In the randomized setting, we use \cite[Lemma~26]{BDB20}. If $k = \sqrt{n}$, we need at least $\Omega(n)$ queries. Since the problem becomes more difficult when $k$ decreases, this proves the lower bound for all $k \leq \sqrt{n}$. On the other hand, if $k > \sqrt{n}$, then we define $n' \in \Theta((n/k)^2)$. For any bitstring $x' \in \{0,1\}^{n'}$, we define a bitstring $x$ of length $\Theta(n)$ by duplicating all the bits of $x'$ a total of $n/n' \in \Theta(k^2/n)$ times. Then, a Hamming weight estimation algorithm with precision $k$ will find an approximation of the Hamming weight of $x$ with precision $k$, which means that it outputs an approximation of the Hamming weight of $x'$ with precision $\Theta(k/(k^2/n)) = \Theta(n/k) = \Theta(\sqrt{n'})$. Thus, this algorithm must query $x$ at least $\Omega(n') = \Omega((n/k)^2)$ times.

        Finally, in the quantum setting, the result follows easily from \cite{Amb00}. For $k = 1$, the hardness already follows from the majority function, whose hardness was proved in \cite{BBC+01}, and the hardness of the general case was basically proved in \cite{NW99}, even though not phrased as such.
    \end{proof}
\end{document}